%% LyX 1.6.7 created this file.  For more info, see http://www.lyx.org/.
%% Do not edit unless you really know what you are doing.
\documentclass[12pt,a4paper,twoside,english,american]{scrartcl}
\usepackage{helvet}
\usepackage{mathtools}
\usepackage{url}
\usepackage[T1]{fontenc}
\usepackage[latin9]{inputenc}
\setcounter{tocdepth}{5}
\setlength{\parskip}{\medskipamount}
\setlength{\parindent}{0pt}
\usepackage{amsmath}
\usepackage{setspace}
\usepackage{amssymb}
\usepackage{enumerate}
\usepackage{pstcol}

\makeatletter
%\doublespacing
%%%%%%%%%%%%%%%%%%%%%%%%%%%%%% LyX specific LaTeX commands.

%%%%%%%%%%%%%%%%%%%%%%%%%%%%%% Textclass specific LaTeX commands.
 \usepackage{amsthm}
 \theoremstyle{plain}
\newtheorem{thm}{Theorem}[section]
  \theoremstyle{plain}
  \newtheorem*{thm*}{Theorem}
  \theoremstyle{plain}
  
  \theoremstyle{remark}
  \newtheorem{rem}[thm]{Remark}
  \theoremstyle{definition}
  \newtheorem{defn}[thm]{Definition}
 \theoremstyle{definition}
 \newtheorem*{defn*}{Definition}
  \theoremstyle{plain}
  \newtheorem{lem}[thm]{Lemma}
 \theoremstyle{definition}
  
  \theoremstyle{remark}
   %%Numbering of Cases not keyed to sections

%%%%%%%%%%%%%%%%%%%%%%%%%%%%%% User specified LaTeX commands.
%\usepackage[T1]{fontenc}    % Silbentrennung auch nach Umlauten m?glich

\usepackage[T1]{fontenc}    % Silbentrennung auch nach Umlauten m?glich

\usepackage{a4wide}        % A4-Format
\addtolength{\headheight}{2pt}
\usepackage{tu-preprint}
\usepackage{amsmath}

\usepackage{euscript}
\usepackage{mathabx}
\allowdisplaybreaks[1] % allow page breaks in some displayed equations
\usepackage{amsfonts}
\newenvironment{keywords}{ \noindent\footnotesize\textbf{Keywords and phrases:}}{}

\newenvironment{class}{\noindent\footnotesize\textbf{Mathematics subject classification 2010:}}{}

\usepackage{color,tu-preprint}

                  % sets
%\newcommand*{\set}[1]{{\mathit{bold}{#1}}}                  % sets
                % fields
     % graph
%\newcommand*{\closure}[1]{\overline{#1}}            % closure
%\newcommand*{\closure}{\mathop{\mathrm{cl}}\nolimits}            %
            % domain

            % Span
            % degree

%\newcommand*{\restricted}[1]{{\rule[-1.2ex]{0.4pt}{2.9ex}\,}_{#1}}
%\newcommand*{\restricted}[1]{{\!\mid\!}{#1}}

\newcommand*{\dive}{\operatorname{div}}

\newcommand*{\curl}{\operatorname{curl}}

\newcommand*{\grad}{\operatorname{grad}}

\newcommand*{\ii}{\mathrm{i}}
\renewcommand*{\i}{\mathrm{i}}
\renewcommand*{\d}{\mathrm{d}}

\newcommand{\oi}[2]{\left]#1,#2 \right[}

\newcommand{\lci}[2]{\left[#1,#2 \right[}
\newcommand{\rci}[2]{\left]#1,#2 \right]}
\newcommand{\Max}{\mathrm{Max}}
\newcommand{\Dac}{\mathrm{Dac}}
\newcommand{\Nac}{\mathrm{Nac}}
\newcommand{\ac}{\mathrm{ac}}
\newcommand{\E}{\mathcal{E}}
%\newcommand{\f}[1]{\textup{(\ref{#1})}}
%\newcommand{\er}[1]{\textup{(#1)}}
%\DeclareRobustCommand{\f}[1]{\textup{(\ref{#1})}}
%\DeclareRobustCommand{\pref}[1]{\ref{#1}}
%\newcommand*{\la}{\big\langle}
%\newcommand*{\ra}{\big\rangle}

%\newcommand*{\id}{{\mathrm{id}}}

%\renewcommand{\closure}[1]{\operatorname{cl}(#1)}

%\renewcommand{\interior}[1]{\operatorname{int}(#1)}

\DeclareMathAccent{\Circ}{\mathalpha}{operators}{"17}
\newcommand{\interior}[1]{\Circ{#1}}
\renewcommand{\Im}{\operatorname{\mathfrak{Im}}}
\renewcommand{\Re}{\operatorname{\mathfrak{Re}}}

%\newcmmand{\oline}[1]{\overline{#1}}
%\newcommand{\oline}[1]{\widehat{#1}}
%\newcommand{\uline}[1]{\underline{#1}}
%\newcommand{\uline}[1]{\check{#1}}
%\newcommand{\mline}[1]{\widetilde{#1}}

\theoremstyle{plain}% default

\newtheorem{Sa*}[section]{Theorem}

\newtheorem{Le*}[section]{Lemma}

\newtheorem{Fo*}[subsection]{Corollary}

\newtheorem{Prop*}[section]{Proposition}

\theoremstyle{definition}

\theoremstyle{remark}

 \numberwithin{equation}{section}

\DeclareMathOperator{\TextRe}{Re}
\DeclareMathOperator{\TextIm}{Im}
\renewcommand{\Re}{\TextRe}
\renewcommand{\Im}{\TextIm}

\newcommand{\C}{\mathbb{C}}

\newcommand{\ben}{\begin{enumerate}[(i)]}
\newcommand{\een}{\end{enumerate}}

\renewcommand{\hat}{\widehat}
\renewcommand{\tilde}{\widetilde}
\renewcommand*{\epsilon}{\varepsilon}
\renewcommand*{\rho}{\varrho}

\arraycolsep2pt

\makeatother

\usepackage{babel}

\institut{Institut f\"ur Analysis}

\preprintnumber{MATH-AN-02-2014}

\preprinttitle{On a Connection between the Maxwell System, the Extended Maxwell
System, the Dirac Operator and Gravito-Electromagnetism.}

\author{ Rainer Picard, Sascha Trostorff, \& Marcus Waurick}

\begin{document}

\makepreprinttitlepage

\author{ Rainer Picard, \\ Sascha Trostorff, \\ Marcus Waurick \\ Institut f\"ur Analysis, Fachrichtung Mathematik\\ Technische Universit\"at Dresden\\ Germany\\ rainer.picard@tu-dresden.de\\ sascha.trostorff@tu-dresden.de\\ marcus.waurick@tu-dresden.de } 

\title{On a Connection between the Maxwell System, the Extended Maxwell
System, the Dirac Operator and Gravito-Electromagnetism.}

\maketitle

\begin{abstract} \textbf{Abstract.} Maxwell's equation, Dirac's equation and the equation of gravito-electromagnetism are shown to be particular instances of the extended Maxwell system. The equations are discussed in the framework of the theory of evolutionary equations. Their formal relationship are systematically analyzed. Applications to coupled systems such as the Maxwell-Dirac system are also discussed.  \end{abstract}

\begin{keywords} Maxwell's equations, extended Maxwell system, Dirac operator, Gravito-\-Electromagnetism, Maxwell-Dirac system, evolutionary equations\end{keywords}

\begin{class} 35Q61 \end{class}

\newpage

%\tableofcontents{} 

\setcounter{section}{-1}

\maketitle

\tableofcontents{}

\newpage

\section{Introduction}

We shall approach the different systems mentioned in the title as
instances of the typical linear case of a general space-time operator
equation
\begin{equation*}
\partial_{0}V+AU=f,
\end{equation*}
where $f$ comprises given data, $\partial_0$ denotes the derivative with respect to time,  $A$ is a -- usually -- purely spatial
operator and the quantities $U,V$ are linked by a so-called material
law
\begin{equation*}
V=\mathcal{M}U.
\end{equation*}
Then the closed operator sum in question is $\overline{\partial_{0}\mathcal{M}+A}$.
As a matter of ``philosophy'', for the way we like to think about
this problem class, the material law operator $\mathcal{M}$ is encoding
the complexity of the physical material, whereas $A$ is kept simple
and usually only contains spatial derivatives. If $\mathcal{M}$ commutes
with $\partial_{0}$ we speak of an autonomous system, a case which
we will focus on here as a matter of simplification (for a treatment of non-autonomous problems see \cite{Picard2013_nonauto, Waurick2013_nonauto}).

A prominent feature distinguishing general operator equations from
those describing dynamic processes is the specific role of time, which
is not just another space variable, but characterizes dynamic processes
via the property of causality. Requiring causality for the solution
operator $\overline{\partial_{0}\mathcal{M}+A}^{-1}$ results in very
specific types of material law operators $\mathcal{M}$, which are
causal and compatible with causality of $\overline{\partial_{0}\mathcal{M}+A}^{-1}$.
This leads to deeper insights into the structural properties of
mathematically viable models of physical phenomena. Causality suggests
to think of prescribed initial data, which can be achieved by extending
the solution theory canonically to temporal distributions with values
in a Hilbert space. In this perspective considering an initial value
problem, i.e. prescribing $V\left(0+\right)$, amounts to allowing
a source term $f$ of the form $\delta\otimes V_{0}$ defined by
\begin{equation*}
\delta\otimes V_{0}\left(\varphi\right)\coloneqq\left\langle V_{0}|\varphi\left(0\right)\right\rangle _{H}
\end{equation*}
for $\varphi$ in the space $\interior{C}\left(\mathbb{R},H\right)$ of continuous
$H$-valued functions with compact support. This source term encodes
indeed the classical initial condition $V\left(0+\right)=V_{0}$. 

Throughout, the scalar field of any Hilbert space $H$ under consideration
is the field of complex numbers, although we shall implicitly consider
this type of Hilbert space occasionally as a real Hilbert space in
so far as we consider positivity with respect to the corresponding
real inner product $\Re\left\langle \:\cdot\:|\:\cdot\:\right\rangle _{0}$
derived from the actual inner product $\left\langle \:\cdot\:|\:\cdot\:\right\rangle _{0}$
of $H$. Note that inner products, indeed any sesquilinear forms,
are assumed to be conjugate-linear in the first component and linear
in the second component. 

Taking Maxwell's equations as a starting example, we recall that these describe the evolution of the electro-magnetic
field $(E,H)$ in a $3$-dimensional open set $\Omega$. As Gauss's
law can be incorporated by a suitable choice of initial data, we think
of Maxwell's equations as being composed of two separate laws of which
the first is Faraday's law of induction (the Maxwell-Faraday equation),
which reads as 
\begin{equation*}
\partial_{0}B+\interior\curl E=0,
\end{equation*}
where $\interior\curl$ denotes the (distributional) $\curl$ operator
in $L^{2}(\Omega)^{3}$ with the electric boundary condition. $B$,
the magnetic induction, satisfies the constitutive equation
\begin{equation*}
B=\mu H,
\end{equation*}
where $\mu$ is the magnetic permeability. Faraday's law is complemented
by Ampere's law
\begin{equation*}
\partial_{0}D+J_{c}-\curl H=J_{0}
\end{equation*}
for $J_{0}$, $D$, $J_{c}$ being the external currents, the electric
displacement and the induced current, respectively. The latter two
quantities satisfy the two equations
\begin{align*}
D & =\epsilon E,\text{ and }\\
J_{c} & =\sigma E.
\end{align*}
The former is a constitutive equation involving the dielectricity
$\epsilon$ and the latter is Ohm's law with the conductivity $\sigma$. Plugging the constitutive
relations and Ohm's law into Faraday's law of induction and Ampere's
law and arranging them in a block operator matrix equation, we arrive
at 
\begin{equation*}
\left(\partial_{0}\left(\begin{array}{cc}
\epsilon & 0\\
0 & \mu
\end{array}\right)+\left(\begin{array}{cc}
\sigma & 0\\
0 & 0
\end{array}\right)+\left(\begin{array}{cc}
0 & -\curl\\
\interior\curl & 0
\end{array}\right)\right)\left(\begin{array}{c}
E\\
H
\end{array}\right)=\left(\begin{array}{c}
J_{0}\\
0
\end{array}\right).
\end{equation*}

The above form of Maxwell's equations is actually due to O. Heaviside,
which is therefore sometimes referred to as the Maxwell-Heaviside
system. Maxwell used a formalization based on quaternions, which were
discovered independently by O. Rodrigues (1840) and A. Hamilton (1843).
Following the formal implications of the quaternion calculus T. Ohmura,
\cite{Ohmura1956}, appears to have been the first to suggest explicitly
an extended Maxwell system, which, as a by-product, allowed to incorporate
charge preservation into the system of Maxwell's equations, without
causing overdeterminancy, compare also \cite{0305-4470-28-17-030-1}.
This system in turn coincides in its Euclidean (non-quaternionic)
form with a system proposed by Dirac, \cite{Dirac01091931}, for the
construction of electrodynamic potentials in the presence of magnetic
and electric charges. 

Although Dirac has developed his equations from a direct formal calculation
to find a first order factorization for the Klein-Gordon operator,
the Dirac equation is easily recognized as of quaternionic nature
(The Pauli matrices multiplied by the imaginary unit $\ii$ together
with the identity form a unitary quaternion basis). 

A variant of the previously discussed extensions of the Maxwell system
has been proposed in a quaternionic framework as a linearized version
of the gravitation equations by A. Singh, \cite{citeulike:3940266}.
For this approach the label gravito-electromagnetism or GEM has been
used, compare e.g. \cite{Ulrych2006} as a more recent reference.
The initiating idea, however, dates back to O. Heaviside, \cite{OH1893}.

An independent strand of development is rooted in the use of alternating
differential forms for the formulation of the Maxwell-Heaviside system.
The rigorous investigation of initial boundary value problems for
Maxwell's equations in the alternating differential form setting goes
back to H. Weyl, \cite{Ref730}, and K. O. Friedrichs, \cite{Ref165}.
We mention in passing that, as noticed by N. Weck, the differential
forms framework proved to be essential in showing a compact resolvent
result for the spatial derivative operator of Maxwell's equations
in the case of non-smooth boundaries, see \cite{Ref697}. Such types
of results were previously derived from Rellich's compact embedding
result, which, however, fails to hold even for simple non-smooth boundaries
such as for example a unit cube $\oi01^{3}$ with a smaller cube $\left[0,\epsilon\right]^{3}$,
$\epsilon\in\oi01$, removed. 

Returning to our purpose here, we note that by following the intrinsic
logic of the differential forms setting, a functional analytical framework
for initial boundary value problems for Maxwell's equations can be
embedded in an extended first order system on Riemannian manifolds,
\cite{0579.58030}. Considering the Euclidean free space case, it
turns out that this extension formally coincides with the Ohmura system, \cite{Ohmura1956}. 

That alternating tensors can also be used to formulate quantum dynamics
has been observed and proposed (almost simultaneously to the publication
of the celebrated Dirac system) by Iwanenko and Landau, \cite{zbMATH02579750}.
Unfortunately, the idea has not found the attention of the main stream
at the time, which eventually led to the independent rediscovery of
the possibility of reformulating quantum dynamics in 4-dimensional
alternating differential forms by E. K\"ahler, \cite{0127.31402}. Again,
in its Euclidean form, these equations turn out to coincide with the
above extended Maxwell system.

It is again the Clifford algebra setting in which the connection between
the Dirac equation and Maxwell's equations has first been established,
see e.g. \cite{0879.15023,Krav}. 

Despite of the possibility to discuss the interconnectedness of GEM,
the Dirac equation and Maxwell equations within the general framework
of alternating differential forms on Riemannian manifolds, as a matter
of accessibility and simplicity we shall here focus on the Euclidean
case in open domains of $\mathbb{R}^{3}$. Our main interest is to
establish well-posedness results for related initial boundary value
problems and to discuss the intimate connections between these systems
and their impact on some applications. 

In section \ref{sec:Maxwell's-System} we shall start by establishing
Maxwell's equations in the framework of a functional analytical approach
developed in \cite{Pi2009}. This approach hinges on the observation
that the time-derivative can be established as a normal operator.
In the present context, this observation allows for a transparent
discussion of the extended Maxwell system leading to new insights
into this construction, see subsection \ref{sub:The-Extended-Maxwell}.
In the sections \ref{sec:GEM} and \ref{sec:The-Dirac-System} we establish
the links to the GEM and the Dirac equation. The last section is
dedicated to a particular coupled system of Maxwell's equations for
an electromagnetic field, the GEM type system for the associated potentials
and the Dirac equation, which in the light of our observations turns
into a coupled system of three extended Maxwell systems.

\section{\label{sec:Maxwell's-System}Maxwell's System}

\subsection{Maxwell's equations as an Evolutionary Problem}

We recall from \cite{Pi2009} that the temporal differentiation
operator denoted by $\partial_{0}$ can be established as a normal
operator in the exponentially weighted complex $L^{2}$-type space $H_{\nu,0}(\mathbb{R},H)$ given by 
\[
H_{\nu,0}(\mathbb{R},H)\coloneqq \{f:\mathbb{R}\to H\,|\,(t\mapsto f(t)\exp(-\nu t))\in L^2(\mathbb{R};H)\}, 
\]
where $\nu\in \oi{0}{\infty}$. We endow this space with its natural inner product $\left\langle \,\cdot\,|\,\cdot\,\right\rangle _{\nu,0,0}$ defined by 
\begin{equation*}
\left(u,v\right)\mapsto\int_{\mathbb{R}}\left\langle u\left(t\right)|v\left(t\right)\right\rangle _{0}\,\exp\left(-2\nu t\right)\, \d t
\end{equation*}
where $\left\langle \,\cdot\,|\,\cdot\,\right\rangle _{0}$ denotes
the inner product of the underlying complex Hilbert space $H$. If
we consider $H_{\nu,0}\left(\mathbb{R},H\right)$ as a real Hilbert
space with inner product $\Re\left\langle \,\cdot\,|\,\cdot\,\right\rangle _{\nu,0,0}$
then it turns out that $\partial_{0}$ is strictly positive definite and hence, continuously invertible. More precisely, we have
\[
 \Re \langle \partial_0 u|u\rangle_{\rho,0,0}= \rho|u|_{\rho,0,0}^2.
\]
Moreover, we define the extrapolation space $H_{\rho,k}(\mathbb{R},H)$ for $k\in \mathbb{Z}$ associated with $\partial_0$ as the completion of $D(\partial_0^k)$ with respect to the inner product
\begin{equation*}
 (u,v) \mapsto \langle \partial_0^k u| \partial_0^k v\rangle_{\rho,0,0} \eqqcolon \langle u|v\rangle_{\rho,k,0}.
\end{equation*}
In generalization of our earlier discussed material relations we obtain Maxwell's equations in the
form 
\begin{equation}
\left(\partial_{0}M_{0}+M_{1}+\left(\begin{array}{cc}
0 & -\curl\\
\interior\curl & 0
\end{array}\right)\right)\left(\begin{array}{c}
E\\
H
\end{array}\right)=\left(\begin{array}{c}
-J\\
K
\end{array}\right) \label{eq:maxwell}.
\end{equation}
Here, $\interior\curl$ is defined as the closure of
\begin{align*}
 \curl|_{\interior{C}_\infty(\Omega)} : \interior{C}_\infty(\Omega)^3\subseteq L^2(\Omega)^3 &\to L^2(\Omega)^3 \\
              (\phi_1,\phi_2,\phi_3) &\mapsto \left(\begin{array}{ccc}
0 & -\partial_{3} & \partial_{2}\\
\partial_{3} & 0 & -\partial_{1}\\
-\partial_{2} & \partial_{1} & 0
\end{array}\right) \begin{pmatrix}
                    \phi_1 \\
                    \phi_2\\
                    \phi_3
                   \end{pmatrix},
\end{align*}
where $\Omega\subseteq \mathbb{R}^3$ is any open set\footnote{Note that we do not need to impose any additional geometric constraints on $\Omega$, since boundary trace results and compact embedding results are not required for a basic well-posedness result.} and $\interior{C}_\infty(\Omega)$ denotes the set of infinitely differentiable functions with compact support in $\Omega$. The operator $\curl$
is then defined as its adjoint $\interior\curl^{*}$. As a consequence, the operator
\begin{equation*}
\left(\begin{array}{cc}
0 & -\curl\\
\interior\curl & 0
\end{array}\right) 
\end{equation*}
is skew-selfadjoint. Moreover, by the definition of $\interior{\curl}$ an implicit boundary condition in (\ref{eq:maxwell}) is prescribed for the unknown $E$, namely the electric boundary condition $n\times E=0$, where $n$ denotes the outward unit normal vector field on $\partial \Omega$ (indeed, the condition $E\in D(\interior{\curl})$ is a generalization of the electric boundary condition, since we can define the operator $\interior{\curl}$ on domains, where no unit outward normal vector field exists). 

\begin{rem}
Although any other boundary condition for $\left(\begin{array}{cc}
0 & -\curl\\
\curl & 0
\end{array}\right)$ producing -- say -- an m-accretive operator (see \cite{Trostorff2013_maxmon_bd} for a characterisation of those boundary conditions), under the constraint on the coefficients
discussed here, yield well-posedness, we focus here on the standard,
skew-selfadjoint case, since our goal is to gain structural insights
in connection with the extended Maxwell system.  
\end{rem}

% Due to the function calculus for the normal operator $\partial_{0}$
% general material law operators $M\left(\partial_{0}^{-1}\right)$
% could be considered, but for sake of simplicity and accessibility
% we only consider the simple ``affine'' type situation 
% \begin{equation*}
% M\left(\partial_{0}^{-1}\right)=M_{0}+\partial_{0}^{-1}M_{1}
% \end{equation*}
The material law in (\ref{eq:maxwell}) is given by 
\begin{equation}{\label{eq:mat_law}}
 \mathcal{M}=M_0+\partial_0^{-1}M_1, 
\end{equation}
where the operators $M_0$ and $M_1$  are arbitrary bounded linear operators on
$H\coloneqq L^{2}\left(\Omega\right)^{3}$ satisfying the two requirements 
\begin{itemize}
\item \label{(H1)--selfadjoint}(\textbf{H1}) $M_{0}$ selfadjoint,
\item \label{(H2)-posdef}(\textbf{H2}) $\nu M_{0}+\Re M_{1}\geq c_{0}$
for some $c_{0}\in\oi0\infty$ and all sufficiently large $\nu\in\oi0\infty$.
\end{itemize}
Under these assumptions the operator $\left(\partial_{0}M_{0}+M_{1}+\left(\begin{array}{cc}
0 & -\curl\\
\interior\curl & 0
\end{array}\right)\right)$ together with its adjoint are both strictly positive definite obviously
resulting in well-posedness of (\ref{eq:maxwell}), compare \cite{Pi2009}. Moreover, the solution operator is \emph{causal}, i.e., denoting \[S\coloneqq \overline{\left(\partial_{0}M_{0}+M_{1}+\left(\begin{array}{cc}
0 & -\curl\\
\interior\curl & 0
\end{array}\right)\right)}^{-1}\in L(H_{\rho,0}(\mathbb{R},H),H_{\rho,0}(\mathbb{R},H))\] we have the operator equation
\[
   \chi_{]-\infty,0[}(\mathrm{m_0})S\chi_{]-\infty,0[}(\mathrm{m_0})=\chi_{]-\infty,0[}(\mathrm{m_0})S,
\]
 where $\chi_{]-\infty,0[}(\mathrm{m_0})$ denotes the multiplication operator on $H_{\rho,0}(\mathbb{R},H)$ induced by the cut-off function $t\mapsto \chi_{]-\infty,0[}(t)$.

\begin{rem} $\,$
\begin{enumerate}[(a)]
 \item Indeed, in \cite{Pi2009} the well-posedness is proved for a broader class of possible material laws $\mathcal{M}$, which are given as analytic operator-valued functions of $\partial_0^{-1}$. However, for sake of simplicity we restrict ourselves to the ``affine'' type situation (\ref{eq:mat_law}).
\item It is worth noting, that although in restricted cases well-posedness
of Maxwell's equations is a classical result, it is still a new result
in the current generality, since here $M_{0}$, $M_{1}$ are operators
(and not just matrices or matrix-valued functions with $L^{\infty}\left(\Omega\right)$-entries).
So, in particular non-local coefficients are painlessly included.
Even if we consider only the matrix-valued case, the well-posedness
result obtained is surprisingly powerful. We have \textbf{not }required
$M_{0}$ to be strictly positive definite, but merely that (\textbf{H1}),
(\textbf{H2}) hold. Thus, we have without additional effort not only
included eddy current type problems, where
\begin{equation*}
M_{0}=\left(\begin{array}{cc}
0 & 0\\
0 & \mu
\end{array}\right),\: M_{1}=\left(\begin{array}{cc}
\sigma & 0\\
0 & 0
\end{array}\right),
\end{equation*}
with $\mu,\sigma$ strictly positive definite, but indeed mixed problems,
where $M_{0}$ is strictly positive definite in some part of $\Omega$
and we have an eddy current type situation in another part of $\Omega$.
A typical model situation for the latter case is the transmission
problem between a transformer core with a gap and its complement,
if within the transformer core an eddy current approximation is used,
whereas in the complement the Maxwell system with a strictly positive
definite $M_{0}$ is assumed. With the usual methods (semi-groups,
symmetric hyperbolic systems, spectral theory etc.) such a model situation
is not as easily accessible. 
\item It is also noteworthy that we do not assume that $M_{0},M_{1},$ are
block-diagonal, as in the classical anisotropic inhomogeneous media
case, so general bi-anisotropic and chiral media are covered by the
same description. 
\end{enumerate}
\end{rem}

\begin{rem}[See-saw regularity and initial-value problems]$\,$
\begin{enumerate}[(a)]
\item With the help of the notion of Sobolev lattices the equation
\begin{equation}\label{eq:max_see_saw}
  \left(\partial_{0}M_{0}+M_{1}+\left(\begin{array}{cc}
0 & -\curl\\
\interior\curl & 0
\end{array}\right)\right)U=F
\end{equation}
 may also be read line by line. The price one has to pay is that even if $F\in H_{\rho,k}(\mathbb{R},H)$ only the solution $U$ and $\left(\partial_{0}M_{0}+M_{1}+\left(\begin{array}{cc}
0 & -\curl\\
\interior\curl & 0
\end{array}\right)\right)U$ belong to  $H_{\rho,k}(\mathbb{R},H)$. But as we suppressed to write the closure for avoiding to clumsy a notation we observe that the constituents of the sum on the right-hand side only satisfy the following
\begin{multline*}
   \partial_{0}M_{0}U\in H_{\rho,k-1}(\mathbb{R},H),\ M_{1}U\in H_{\rho,k}(\mathbb{R},H),\ \partial_{0}M_{0}U\in H_{\rho,k-1}(\mathbb{R},H),\\
  \left(\begin{array}{cc}
0 & -\curl\\
\interior\curl & 0
\end{array}\right)\begin{pmatrix}
                     U_+ \\ U_-
                  \end{pmatrix} \in H_{\rho,k}(\mathbb{R},H_{-1}(|\interior\curl|+\i)\oplus H_{-1}(|{\curl}|+\i)),  
\end{multline*}
where we used the notion of Sobolev lattices, see e.g.~\cite{PDE_DeGruyter} (and in particular \cite[p. 50]{PDE_DeGruyter}). Later, we shall need that $U$ in fact attains values in the domain of the Maxwell operator if interpreted suitably. This can be made precise as follows. Rearranging the terms in \eqref{eq:max_see_saw}, we get that
\[
   \left(\begin{array}{cc}
0 & -\curl\\
\interior\curl & 0
\end{array}\right)U=F-\partial_{0}M_{0}U+M_{1}U\in H_{\rho,k-1}(\mathbb{R},H).
\]
Hence, from $U\in H_{\rho,k}(\mathbb{R},H)\subseteqq H_{\rho,k-1}(\mathbb{R},H)$ we infer that
\begin{multline*}
   U=\left(1+\left(\begin{array}{cc}
0 & -\curl\\
\interior\curl & 0
\end{array}\right)\right)^{-1}\left(F-\partial_{0}M_{0}U+M_{1}U+U\right)\\ \in H_{\rho,k-1}(\mathbb{R},H_{1}(|\interior\curl|+\i)\oplus H_{1}(|{\curl}|+\i)).
\end{multline*}
In particular, for $F$ belonging to the inductive limit $H_{\nu,-\infty}\left(\mathbb{R},H\right)\coloneqq\bigcup_{k\in\mathbb{Z}}H_{\nu,k}\left(\mathbb{R},H\right)$, we get that $U\in H_{\rho,-\infty}(\mathbb{R},H_{1}(|\interior\curl|+\i)\oplus H_{1}(|{\curl}|+\i))$. 

\item  If we want to include initial values in Maxwell's equations 
\begin{equation} \label{eq:maxwell_abstract}
\left(\partial_{0}M_{0}+M_{1}+\left(\begin{array}{cc}
0 & -\curl\\
\interior\curl & 0
\end{array}\right)\right)U=F, 
\end{equation} we may assume that the given right hand side $F$ vanishes on $\rci{-\infty}{0}$ and we additionally assume that\footnote{Note that it only makes sense to prescribe an initial value for the part of $U$, which gets differentiated with respect to time.} $M_0 U(0+)=M_0 U_0$ for a given element $U_0\in H$. There are two ways to incorporate this initial value problem into our abstract framework. 
\begin{itemize}
 \item If $U_0\in D\left(\left(\begin{array}{cc}
0 & -\curl\\
\interior\curl & 0
\end{array}\right)\right)$ then we could consider the problem \begin{multline}
\left(\partial_{0}M_{0}+M_{1}+\left(\begin{array}{cc}
0 & -\curl\\
\interior\curl & 0
\end{array}\right)\right)\left(U-\chi_{_{\oi0\infty}}\otimes U_{0}\right) \\ =F-\chi_{_{\oi0\infty}}\otimes\left(\begin{array}{cc}
0 & -\curl\\
\interior\curl & 0
\end{array}\right)U_{0}-\chi_{\oi{0}{\infty}}\otimes M_1U_0\label{eq:initial1},
\end{multline}
which is well-posed according to our solution theory. Clearly, $U$ satisfies the differential equation (\ref{eq:maxwell_abstract}) for positive times. Moreover, due to the causality of the solution operator, we get that $U-\chi_{_{\oi0\infty}}\otimes U_{0}$  is supported on the positive real axis. Furthermore, from \eqref{eq:initial1} we read off that \[M_0(U-\chi_{\oi{0}{\infty}}\otimes U_0)\in H_{\nu,1}(\mathbb{R};H_{-1}(A+1)),\] where $H_{-1}(A+1)$ denotes the first extrapolation space associated with the operator $A+1$ (see \cite[Section 2.1]{PDE_DeGruyter}). Thus, by a version of the Sobolev embedding theorem (see \cite[Lemma 3.1.59]{PDE_DeGruyter}), we obtain the continuity of $M_0(U-\chi_{_{\oi0\infty}}\otimes U_{0})$ yielding
\[
 M_0(U-\chi_{_{\oi0\infty}}\otimes U_{0})(0+)=M_0(U-\chi_{_{\oi0\infty}}\otimes U_{0})(0-)=0, 
\]
 which implies $M_0 U(0+)=M_0 U_0$.
\item  Alternatively, for including initial data we note that we can extend
the above solution theory to $H_{\nu,-\infty}\left(\mathbb{R},H\right)$,
and that initial data can then be modeled as Dirac-$\delta$-sources.
Indeed, noting that $\partial_{0}\chi_{_{\oi0\infty}}=\delta$
we get 
\begin{equation}
\left(\partial_{0}M_{0}+M_{1}+\left(\begin{array}{cc}
0 & -\curl\\
\interior\curl & 0
\end{array}\right)\right)U=F+\delta\otimes M_0 U_{0}\label{eq:initial2}
\end{equation}
as an equation in $H_{\nu,-\infty}\left(\mathbb{R},H\right)$ modelling
the prescription of initial data. 
\end{itemize}
\end{enumerate}
\end{rem}

In view of our perspective on the extended Maxwell system, which we
wish to approach next, we consider Maxwell's equations in a more simple
situation, where actually $M_{1}=0$. As a consequence (\textbf{H1}),
(\textbf{H2}) imply that $M_{0}$ is selfadjoint and strictly positive
definite. In this situation, classical methods can of course easily
be employed. Indeed, we have an abundance of structure. Obviously,
equation (\ref{eq:maxwell}) can in this case be re-written as
\begin{equation}
\left(\partial_{0}+\sqrt{M_{0}^{-1}}\left(\begin{array}{cc}
0 & -\curl\\
\interior\curl & 0
\end{array}\right)\sqrt{M_{0}^{-1}}\right)V=\sqrt{M_{0}^{-1}}\left(\begin{array}{c}
-J\\
K
\end{array}\right)\eqqcolon F,\label{eq:maxwell-m1-0}
\end{equation}
where $V=\sqrt{M_{0}}\left(\begin{array}{c}
E\\
H
\end{array}\right)$. Here now $\sqrt{M_{0}^{-1}}\left(\begin{array}{cc}
0 & -\curl\\
\interior\curl & 0
\end{array}\right)\sqrt{M_{0}^{-1}}$ is still skew-selfadjoint and commuting with the normal operator
$\partial_{0}$ and so we can consider (\ref{eq:maxwell-m1-0}) in
the convenient framework of commuting selfadjoint operators, for which
we can approach solvability simply by noting that the operator occurring
on the left-hand side is actually given by a function evaluation $f\left(\Im\partial_{0},\frac{1}{\ii}\sqrt{M_{0}^{-1}}\left(\begin{array}{cc}
0 & -\curl\\
\interior\curl & 0
\end{array}\right)\sqrt{M_{0}^{-1}}\right)$, where%
\footnote{Invertibility can be read off, since the solution operator is simply
$g\left(\Im\partial_{0},\frac{1}{\ii}\sqrt{M_{0}^{-1}}\left(\begin{array}{cc}
0 & -\curl\\
\interior\curl & 0
\end{array}\right)\sqrt{M_{0}^{-1}}\right)$ with 
\begin{align*}
g:\mathbb{R}\times\mathbb{R} & \to\mathbb{C},\\
\left(x,y\right) & \mapsto\frac{1}{\ii\left(x+y\right)+\nu}.
\end{align*}
This solution can -- for suitable data -- be represented by utilizing
the fundamental solution associated with (\ref{eq:maxwell-m1-0}),
which is given by 
\begin{equation*}
t\mapsto\chi_{_{\lci0\infty}}\left(t\right)\exp\left(\ii t\frac{1}{\ii}\sqrt{M_{0}^{-1}}\left(\begin{array}{cc}
0 & -\curl\\
\interior\curl & 0
\end{array}\right)\sqrt{M_{0}^{-1}}\right)=\chi_{_{\lci0\infty}}\left(t\right)\exp\left(t\sqrt{M_{0}^{-1}}\left(\begin{array}{cc}
0 & -\curl\\
\interior\curl & 0
\end{array}\right)\sqrt{M_{0}^{-1}}\right),
\end{equation*}
 via a convolution integral
\begin{align*}
V\left(t\right) & =\int_{\mathbb{R}}\chi_{_{\lci0\infty}}\left(t-s\right)\exp\left(\left(t-s\right)\sqrt{M_{0}^{-1}}\left(\begin{array}{cc}
0 & -\curl\\
\interior\curl & 0
\end{array}\right)\sqrt{M_{0}^{-1}}\right)F\left(t\right)\: dt\\
 & =\int_{-\infty}^{t}\exp\left(\left(t-s\right)\sqrt{M_{0}^{-1}}\left(\begin{array}{cc}
0 & -\curl\\
\interior\curl & 0
\end{array}\right)\sqrt{M_{0}^{-1}}\right)F\left(t\right)\: dt\\
 & =\exp\left(-s\sqrt{M_{0}^{-1}}\left(\begin{array}{cc}
0 & -\curl\\
\interior\curl & 0
\end{array}\right)\sqrt{M_{0}^{-1}}\right)\int_{-\infty}^{t}\exp\left(t\sqrt{M_{0}^{-1}}\left(\begin{array}{cc}
0 & -\curl\\
\interior\curl & 0
\end{array}\right)\sqrt{M_{0}^{-1}}\right)F\left(t\right)\: dt.
\end{align*}
} 
\begin{align*}
f:\mathbb{R}\times\mathbb{R} & \to\mathbb{C},\\
\left(x,y\right) & \mapsto\ii\left(x+y\right)+\nu.
\end{align*}
We shall take this observation, which obviously extends to more general
conservative media, as a starting point to discuss the extended Maxwell
system.

\subsection{\label{sub:The-Extended-Maxwell}The Extended Maxwell System}

The extended Maxwell system amounts to a specific coupling of acoustic
equations with Maxwell's equations, \cite{0541.35049}. To fix some notation, we introduce the following differential operators.
\begin{defn}
 Let $\Omega \subseteq \mathbb{R}^n$ be an arbitrary open subset. We define the operator $\interior{\grad}$ as the closure of
\begin{align*}
 \grad|_{\interior{C}_\infty(\Omega)}:\interior{C}_\infty(\Omega)\subseteq L^2(\Omega) & \to L^2(\Omega)^3 \\
          \phi &\mapsto (\partial_1 \phi, \partial_2\phi, \partial_3\phi).
\end{align*}
Likewise, we define $\interior{\dive}$ as the closure of
\begin{align*}
 \dive|_{\interior{C}_\infty(\Omega)}:\interior{C}_\infty(\Omega)^3\subseteq L^2(\Omega)^3 & \to L^2(\Omega) \\
          (\psi_1,\psi_2,\psi_3) &\mapsto \sum_{i=1}^3\partial_i\psi_i.
\end{align*}
Moreover, we set $\grad\coloneqq -(\interior{\dive})^\ast$ and $\dive\coloneqq -(\interior{\grad})^\ast$ and obtain $\interior{\grad}\subseteq \grad$ and $\interior{\dive}\subseteq \dive$. Furthermore we introduce the following block-operator matrices:
\begin{equation*}
 A_{\mathrm{Dac}}\coloneqq \begin{pmatrix}
                            0 & \dive &0 &0 \\
                            \interior{\grad} &0 &0 &0 \\
                            0&0&0&0\\
                            0&0&0&0
                           \end{pmatrix},
\end{equation*}
arising in the study of acoustics with Dirichlet boundary conditions, 
\begin{equation*}
 A_{\mathrm{Nac}}\coloneqq \begin{pmatrix}
                            0 & 0 &0 &0 \\
                            0& 0 &0 &0 \\
                            0&0&0 & \grad\\
                            0&0&\interior{\dive}& 0 
                           \end{pmatrix},
\end{equation*}
denoting the acoustic operator with Neumann boundary conditions, and the Maxwell operator
\begin{equation*}
 A_{\mathrm{Max}}\coloneqq \begin{pmatrix}
                            0 & 0 &0 &0 \\
                            0& 0 &-\curl &0 \\
                            0&\interior{\curl}&0 & 0\\
                            0&0&0& 0 
                           \end{pmatrix}.
\end{equation*}
Finally, we set $A_{\mathrm{ac}}\coloneqq A_{\mathrm{Dac}}+A_{\mathrm{Nac}}$.
\end{defn}

\begin{rem}\label{rem:annihilate}
 We note that the operators $A_{\mathrm{Dac}},A_{\mathrm{Nac}}$ and $A_{\mathrm{Max}}$ are all skew-selfadjoint. Moreover, since $A_{\mathrm{Dac}}$ and $A_{\mathrm{Nac}}$ commute, the operator $A_{\mathrm{ac}}$ is skew-selfadjoint as well. Finally, using the well-known relations $\curl \grad=0$ and $\dive \curl =0$, we derive
\begin{align*}
 A_{\Max}A_{\mathrm{ac}}&=0 \\
 A_{\ac}A_{\Max}&=0.
\end{align*}
\end{rem}

With these skew-selfadjoint operators the extended Maxwell system is
given by

\begin{align}{\label{eq:maxwell_ext}}
\left(\partial_{0}+\sqrt{\mathcal{E}^{-1}}A_{\mathrm{Max}}\sqrt{\mathcal{E}^{-1}}+\sqrt{\mathcal{E}}A_{\mathrm{ac}}\sqrt{\mathcal{E}}\right)V & =\tilde{F},
\end{align}
where $\mathcal{E}$ is a bounded, selfadjoint and strictly positive definite operator on $H\coloneqq L^2(\Omega)\oplus L^2(\Omega)^3\oplus L^2(\Omega)^3 \oplus L^2(\Omega)$ (taking over the role of $M_{0}$ in the previous section).
Setting $A\coloneqq \sqrt{\mathcal{E}^{-1}}A_{\mathrm{Max}}\sqrt{\mathcal{E}^{-1}}+\sqrt{\mathcal{E}}A_{\mathrm{ac}}\sqrt{\mathcal{E}}$, which is a skew-selfadjoint operator on $H$ according to Remark \ref{rem:annihilate}, we arrive at the canonical form 
\begin{equation*}
 (\partial_0+A)V=\tilde{F},
\end{equation*}
 and thus, our solution theory applies. Moreover, we have that \[\left(\Im\partial_{0},\:\frac{1}{\ii}\sqrt{\mathcal{E}^{-1}}A_{\mathrm{Max}}\sqrt{\mathcal{E}^{-1}},\:\frac{1}{\ii}\sqrt{\mathcal{E}}A_{\mathrm{ac}}\sqrt{\mathcal{E}}\right)\]
is a family of commuting, selfadjoint operators and thus,  in the framework of the associated function calculus, we have that
the operator on the left-hand side of (\ref{eq:maxwell_ext}) is now a function of this family given by 
\begin{equation*}
\left(x,y,z\right)\mapsto\ii\left(x+y+z\right)+\nu,
\end{equation*}
 whose invertibility is easily obtained. 
 
% Actually: \emph{no} coupling between $\sqrt{\mathcal{E}^{-1}}A_{\mathrm{Max}}\sqrt{\mathcal{E}^{-1}}$
% and $\sqrt{\mathcal{E}}A_{\mathrm{ac}}\sqrt{\mathcal{E}}$ occurring!
% \begin{align*}
% \sqrt{\mathcal{E}^{-1}}A_{\mathrm{Max}}\sqrt{\mathcal{E}^{-1}}\sqrt{\mathcal{E}}A_{\mathrm{ac}}\sqrt{\mathcal{E}} & =\sqrt{\mathcal{E}^{-1}}A_{\mathrm{Max}}A_{\mathrm{ac}}\sqrt{\mathcal{E}}=0,\\
% \sqrt{\mathcal{E}}A_{\mathrm{ac}}\sqrt{\mathcal{E}}\sqrt{\mathcal{E}^{-1}}A_{\mathrm{Max}}\sqrt{\mathcal{E}^{-1}} & =\sqrt{\mathcal{E}}A_{\mathrm{ac}}A_{\mathrm{Max}}\sqrt{\mathcal{E}^{-1}}=0,
% \end{align*}
% 
% 
% on $D\left(A_{\mathrm{ac}}\sqrt{\mathcal{E}}\right)$ and on $D\left(A_{\mathrm{Max}}\sqrt{\mathcal{E}^{-1}}\right)$,
% respectively. In particular we have
% \begin{equation*}
% D\left(A\right)=D\left(A_{\mathrm{Max}}\sqrt{\mathcal{E}^{-1}}\right)\cap D\left(\sqrt{\mathcal{E}}A_{\mathrm{ac}}\sqrt{\mathcal{E}}\right).
% \end{equation*}

\begin{rem}
It is useful to consider this extended Maxwell system e.g. for
numerical purposes, see \cite{TaskinenV07,Weggler1-2012}, or low
frequency limits \cite{1119.35098,1179.35322,0541.35049}, since the
spatial operator is now a differential operator of elliptic type,
yielding a ``small'' (e.g. a finite dimensional) null space.
\end{rem}

The next theorem shows the interconnection between the extended and the original Maxwell system.

\begin{thm}\label{thm:connection_max_maxext}
 Let $V$ be a solution of (\ref{eq:maxwell_ext}) for a right-hand side $\tilde{F}\in H_{\rho,-\infty}(\mathbb{R};H)$. Then $V$ satisfies
\begin{equation}\label{eq:maxwell_null}
 (\partial_0+\sqrt{\mathcal{E}^{-1}}A_\Max \sqrt{\mathcal{E}^{-1}})V=F,
\end{equation}
 where $F\coloneqq \partial_0(\partial_0 +\sqrt{\mathcal{E}}A_\ac \sqrt{\mathcal{E}})^{-1} \tilde{F}$. Conversely, if $V$ satisfies (\ref{eq:maxwell_null}) for a right-hand side $F\in H_{\rho,-\infty}(\mathbb{R};H)\cap D(A_\ac \sqrt{\mathcal{E}})$, then $V$ solves (\ref{eq:maxwell_ext}) for $\tilde{F}\coloneqq (1+\partial_0^{-1}\sqrt{\mathcal{E}}A_\ac \sqrt{\mathcal{E}})F$.\\
Finally, if $V$ satisfies (\ref{eq:maxwell_null}) and $F$ and $\mathcal{E}$ are of the block form
\begin{equation*}
 F=\begin{pmatrix}
    0\\
    F_1\\
    F_2\\
    0
   \end{pmatrix}\mbox{ and }\mathcal{E}= \begin{pmatrix}
                             \mathcal{E}_{00} & 0&0&0\\
                             0& \mathcal{E}_{11}&\mathcal{E}_{12}&0\\
                             0 & \mathcal{E}_{21}& \mathcal{E}_{22}&0\\
                             0&0&0&\mathcal{E}_{33}
                             \end{pmatrix}
\end{equation*} then $V=(0,E,H,0)$, where $(E,H)$ is a solution (\ref{eq:maxwell-m1-0}) for the right-hand side $(F_1,F_2)$ and $M_0\coloneqq \begin{pmatrix}
                                                                                                                                                         \mathcal{E}_{11}&\mathcal{E}_{12}\\
                                                                                                                                                          \mathcal{E}_{21}& \mathcal{E}_{22}
                                                                                                                                                        \end{pmatrix}$.
\end{thm}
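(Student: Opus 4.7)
The plan is to introduce the shorthand $B \coloneqq \sqrt{\mathcal{E}^{-1}}A_{\Max}\sqrt{\mathcal{E}^{-1}}$ and $C \coloneqq \sqrt{\mathcal{E}}A_{\ac}\sqrt{\mathcal{E}}$ and to exploit the two structural features handed to us by Remark \ref{rem:annihilate}: the identities $A_{\Max}A_{\ac}=0=A_{\ac}A_{\Max}$ immediately give $BC = 0 = CB$, and since $B,C$ are purely spatial they commute with $\partial_0$ and with $\partial_0^{-1}$ on each extrapolation space $H_{\rho,k}(\mathbb{R};H)$, $k\in\mathbb{Z}$. Because $C$ is skew-selfadjoint, $\partial_0+C$ falls under the solution theory referenced in the paper; hence it is continuously invertible on $H_{\rho,0}(\mathbb{R};H)$ and, by extension, on each $H_{\rho,k}(\mathbb{R};H)$, with inverse commuting with $\partial_0$ and $C$.

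For the first claim I would apply $C$ to (\ref{eq:maxwell_ext}). Using $CB=0$ and $C\partial_0=\partial_0 C$, the resulting identity collapses to $(\partial_0+C)CV = C\tilde{F}$, whence $CV = (\partial_0+C)^{-1}C\tilde{F}$. Substituting back into (\ref{eq:maxwell_ext}) gives
\[
(\partial_0+B)V = \tilde{F} - (\partial_0+C)^{-1}C\tilde{F} = (\partial_0+C)^{-1}\bigl((\partial_0+C)-C\bigr)\tilde{F} = \partial_0(\partial_0+C)^{-1}\tilde{F} = F,
\]
which is precisely (\ref{eq:maxwell_null}).

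The converse runs by an analogous manipulation. Starting from (\ref{eq:maxwell_null}) and applying $C$ on the left, the identity $CB=0$ leaves only $\partial_0 CV = CF$, so $CV = \partial_0^{-1}CF$ (the hypothesis $F\in D(A_{\ac}\sqrt{\mathcal{E}})$, interpreted pointwise in the distributional scale, is what makes $CF$ and hence $\partial_0^{-1}CF$ meaningful elements of $H_{\rho,-\infty}(\mathbb{R};H)$). Adding this to (\ref{eq:maxwell_null}) gives
\[
(\partial_0+B+C)V = F + \partial_0^{-1}CF = (1+\partial_0^{-1}C)F = \tilde{F},
\]
which is (\ref{eq:maxwell_ext}).

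For the third assertion, the given block form makes $\mathcal{E}$ act diagonally in the $(1,3,3,1)$-decomposition of $H$, with the middle blocks glued into $M_0 = \bigl(\begin{smallmatrix}\mathcal{E}_{11}&\mathcal{E}_{12}\\\mathcal{E}_{21}&\mathcal{E}_{22}\end{smallmatrix}\bigr)$. Since $A_{\Max}$ has nonzero entries only on the middle two components, $B$ respects this splitting and on those components coincides with the spatial operator appearing in (\ref{eq:maxwell-m1-0}). Reading (\ref{eq:maxwell_null}) line by line, the first and last scalar equations reduce to $\partial_0 V_0 = 0$ and $\partial_0 V_3 = 0$; injectivity of $\partial_0$ on $H_{\rho,-\infty}(\mathbb{R};\mathbb{C})$ then forces $V_0 = V_3 = 0$, and the two middle lines are literally (\ref{eq:maxwell-m1-0}) with right-hand side $(F_1,F_2)$ and $M_0$ as claimed. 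The main obstacle I expect is not algebraic but bookkeeping-theoretical: one must verify that $C$ may legitimately be applied to the distributionally interpreted equations, that $(\partial_0+C)^{-1}$ exists and commutes with $\partial_0$ and $C$ on the relevant extrapolation level, and that the domain hypothesis in the converse is exactly what makes $CV = \partial_0^{-1}CF$ well defined. Once this housekeeping is in order, the argument is driven entirely by the mutually annihilating pair $(B,C)$.
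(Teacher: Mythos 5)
Your argument is correct and rests on exactly the same two pillars as the paper's proof --- writing $B=\sqrt{\mathcal{E}^{-1}}A_{\Max}\sqrt{\mathcal{E}^{-1}}$ and $C=\sqrt{\mathcal{E}}A_{\ac}\sqrt{\mathcal{E}}$ as you do, these are the mutual annihilation $BC=CB=0$ inherited from $A_{\Max}A_{\ac}=A_{\ac}A_{\Max}=0$ and the commutation of $B,C$ with $\partial_0$ --- but you organize it differently. You apply $C$ to the equation, solve the resulting decoupled equation for $CV$, and substitute back; the paper instead records the operator factorization $(\partial_0+B)(\partial_0+C)=\partial_0(\partial_0+B+C)$, inverts it to the resolvent identity $\partial_0^{-1}(\partial_0+B+C)^{-1}=(\partial_0+B)^{-1}(\partial_0+C)^{-1}$, and then simply applies this identity to $\tilde F$ (and reads it in the other factorization order for the converse). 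The paper's route buys exactly what you correctly flag as the delicate point of yours: in the first implication the only hypothesis on $\tilde F$ is membership in $H_{\rho,-\infty}(\mathbb{R};H)$, a scale of extrapolation spaces taken with respect to $\partial_0$ alone, so the unbounded spatial operator $C$ cannot in general be applied to $\tilde F$ and your intermediate identity $(\partial_0+C)CV=C\tilde F$ is not literally meaningful. It can be repaired --- either by passing to the larger Sobolev lattice generated by $(\partial_0,B+1,C+1)$ as in Remark \ref{rem:extended_maxwell}, or by noting that the combination you actually use, $(\partial_0+C)^{-1}C=1-\partial_0(\partial_0+C)^{-1}$, extends to a bounded operator --- but the paper's manipulation of products of bounded resolvents sidesteps the issue entirely. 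Your converse matches the paper's (there the hypothesis $F\in D(A_{\ac}\sqrt{\mathcal{E}})$ is assumed precisely so that $C$ may legitimately be applied), and your treatment of the block-diagonal case is the paper's argument verbatim: the block structure of $\mathcal{E}$ makes $\sqrt{\mathcal{E}^{-1}}A_{\Max}\sqrt{\mathcal{E}^{-1}}$ vanish on the first and last components, whence $\partial_0 V_0=\partial_0 V_3=0$, $V_0=V_3=0$ by invertibility of $\partial_0$, and the middle rows reduce to (\ref{eq:maxwell-m1-0}) with $M_0$ the middle block of $\mathcal{E}$.
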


\begin{proof}
According to Remark \ref{rem:annihilate} we have that
\begin{equation*}
(\partial_0+\sqrt{\mathcal{E}^{-1}}A_\Max\sqrt{\E^{-1}}) (\partial_0+\sqrt{\E}A_\ac\sqrt{\E})=\partial_0(\partial_0+\sqrt{E^{-1}}A_\Max\sqrt{\E^{-1}}+\sqrt{\E}A_\ac\sqrt{\E})
\end{equation*}
and thus,
\begin{equation*}
 \partial_0^{-1}(\partial_0+\sqrt{\E^{-1}}A_\Max\sqrt{\E^{-1}}+\sqrt{\E}A_\ac\sqrt{\E})^{-1}=(\partial_0+\sqrt{\mathcal{E}^{-1}}A_\Max\sqrt{\E^{-1}})^{-1}(\partial_0+\sqrt{\E}A_\ac\sqrt{\E})^{-1}.
\end{equation*}
Let $V$ satisfy (\ref{eq:maxwell_ext}) for a right-hand side $\tilde{F}$ and define $F\coloneqq \partial_0(\partial_0 +\sqrt{\mathcal{E}}A_\ac \sqrt{\mathcal{E}})^{-1} \tilde{F}$ as in the assertion. Then 
\begin{align*}
\partial_0^{-1}V&=\partial_0^{-1}(\partial_0+\sqrt{\E^{-1}}A_\Max\sqrt{\E^{-1}}+\sqrt{\E}A_\ac\sqrt{\E})^{-1}\tilde{F}\\
&=  (\partial_0+\sqrt{\mathcal{E}^{-1}}A_\Max\sqrt{\E^{-1}})^{-1}\partial_0^{-1}F,
 \end{align*}
showing the first assertion. If $V$ solves (\ref{eq:maxwell_null}) for a right-hand side in $F\in D(A_\ac \sqrt{\E})$ then we compute
\begin{align*}
 & (\partial_0+\sqrt{\E}A_\ac\sqrt{\E})^{-1}(\partial_0+\sqrt{\E^{-1}}A_\Max\sqrt{\E^{-1}}+\sqrt{\E}A_\ac\sqrt{\E})\\
&=(\partial_0+\sqrt{\E}A_\ac\sqrt{\E})^{-1}(\partial_0+\sqrt{\mathcal{E}^{-1}}A_\Max\sqrt{\E^{-1}})^{-1}F\\
&= \partial_0^{-1}(\partial_0+\sqrt{\E^{-1}}A_\Max\sqrt{\E^{-1}}+\sqrt{\E}A_\ac\sqrt{\E})^{-1}F,
\end{align*}
which yields
\begin{equation*}
 (\partial_0+\sqrt{\E^{-1}}A_\Max\sqrt{\E^{-1}}+\sqrt{\E}A_\ac\sqrt{\E})V=\partial_0^{-1}(\partial_0+\sqrt{\E}A_\ac\sqrt{\E})F,
\end{equation*}
showing the second assertion. Assume now that $F$ and $\mathcal{E}$ are of the  block form stated above and let $V=(V_0,V_1,V_2,V_3)$ be the solution of (\ref{eq:maxwell_null}). We have that
\begin{equation*}
 \sqrt{\E^{-1}}A_\Max\sqrt{\E^{-1}}=\begin{pmatrix}
                                    0& \begin{pmatrix}
                                       0 & 0 \end{pmatrix} &0 \\
                                    \begin{pmatrix}
                                      0\\
                                      0
                                    \end{pmatrix} & \sqrt{M_0^{-1}} \begin{pmatrix}
                                                                      0& -\curl \\
                                                                     \interior{\curl} &0 
                                                                     \end{pmatrix} \sqrt{M_0^{-1}} & \begin{pmatrix}
                                      0\\
                                      0
                                    \end{pmatrix}\\
                                  0& \begin{pmatrix}
                                       0 & 0 \end{pmatrix} &0
                                   \end{pmatrix}
\end{equation*}
and hence, we obtain $\partial_0 V_0 =\partial_0 V_3=0$, which gives $V_0=V_3=0$ as well as
\begin{equation*}
  \left(\partial_0  +\sqrt{M_0^{-1}} \begin{pmatrix}
                                                                      0& -\curl \\
                                                                     \interior{\curl} &0 
                                                                     \end{pmatrix} \sqrt{M_0^{-1}}\right)\begin{pmatrix}
             V_1 \\
             V_2
            \end{pmatrix}= \begin{pmatrix}
                              F_1\\
                             F_2
                            \end{pmatrix}, 
\end{equation*}
which shows the assertion.
 
\end{proof}

\begin{rem}\label{rem:extended_maxwell}$\,$
 \begin{enumerate}[(a)]
  \item If $\E$ has the block structure
       \begin{equation*}
        \left(\begin{array}{cccc}
\mathcal{E}_{00} & \mathcal{E}_{01} & 0 & 0\\
\mathcal{E}_{10} & \mathcal{E}_{11} & 0 & 0\\
0 & 0 & \mathcal{E}_{22} & \mathcal{E}_{23}\\
0 & 0 & \mathcal{E}_{32} & \mathcal{E}_{33}
\end{array}\right)
       \end{equation*}
it follows that not only $\sqrt{\E^{-1}}A_\Max\sqrt{\E^{-1}}$ and $\sqrt{\E}A_\ac\sqrt{\E}$ annihilate each other but also that $\sqrt{\E}A_\Dac\sqrt{\E}$ and $\sqrt{\E}A_\Nac\sqrt{\E}$ annihilate each other. Hence, $(\partial_0,\sqrt{\E^{-1}}A_\Max\sqrt{\E^{-1}}+1,\sqrt{\E}A_\Dac\sqrt{\E}+1,\sqrt{\E}A_\Nac\sqrt{\E}+1)$ is a family of commuting, continuously invertible operators, allowing us to study the problem
\begin{equation*}
 (\partial_0+\sqrt{\E^{-1}}A_\Max\sqrt{\E^{-1}}+\sqrt{\E}A_\Dac\sqrt{\E}+\sqrt{\E}A_\Nac\sqrt{\E})U=\tilde{F}
\end{equation*}
 in the Sobolev-lattice $H_{\rho,-\infty}(\partial_0,\sqrt{\E^{-1}}A_\Max\sqrt{\E^{-1}}+1,\sqrt{\E}A_\Dac\sqrt{\E}+1,\sqrt{\E}A_\Nac\sqrt{\E}+1)$, see \cite{PDE_DeGruyter}.
\item For general $\E$ we have
\begin{align*}
 &(\partial_0+\sqrt{\E^{-1}}A_\Max\sqrt{\E^{-1}}+\sqrt{\E}A_\ac\sqrt{\E})(\partial_0-\sqrt{\E^{-1}}A_\Max\sqrt{\E^{-1}}+\sqrt{\E}A_\ac\sqrt{\E})=\\
&=\partial_0^2 -\left(\left(\sqrt{\E^{-1}}A_\Max\sqrt{\E^{-1}}\right)^2+\left(\sqrt{\E}A_\ac\sqrt{\E}\right)^2\right).
\end{align*}
In the special case, when $\E=1$ this is nothing but the wave operator
\begin{equation}\label{eq:Dirac-factors}
 \partial_0^2-\left(A_\Max^2+A_\ac^2\right)=\partial_0^2-\Delta\eqqcolon \square,
\end{equation}
with unusually boundary conditions. This leads in the time-harmonic case to a weakly singular Green's
tensor, which explains the mentioned interest for numerical methods.
\item It may be preferable to expose the Hamiltonian structure of the extended
Maxwell operator in the isotropic homogeneous case by applying the
unitary selfadjoint permutation block matrix $\left(\begin{array}{cccc}
0 & 0 & 0 & 1\\
0 & 1 & 0 & 0\\
0 & 0 & 1 & 0\\
1 & 0 & 0 & 0
\end{array}\right)$. Indeed, applying this transformation results in
\begin{equation}
 \left(\begin{array}{cccc}
0 & 0 & 0 & 1\\
0 & 1 & 0 & 0\\
0 & 0 & 1 & 0\\
1 & 0 & 0 & 0
\end{array}\right)\left(A_{\mathrm{Max}}+A_\ac\right)\left(\begin{array}{cccc}
0 & 0 & 0 & 1\\
0 & 1 & 0 & 0\\
0 & 0 & 1 & 0\\
1 & 0 & 0 & 0
\end{array}\right)=\left(\begin{array}{cc}
\left(\begin{array}{cc}
0 & 0\\
0 & 0
\end{array}\right) & \left(\begin{array}{cc}
\interior\dive & 0\\
-\curl & \interior\grad
\end{array}\right)\\
\left(\begin{array}{cc}
\grad & \interior\curl\\
0 & \dive
\end{array}\right) & \left(\begin{array}{cc}
0 & 0\\
0 & 0
\end{array}\right)
\end{array}\right),\label{eq:ham-max1}
\end{equation}
which is of the Hamiltonian form $\begin{pmatrix}
                                       0 & -C^\ast \\
                                       C & 0
                                      \end{pmatrix}$ with $C=\begin{pmatrix}
                                                                \grad & \interior{\curl} \\
                                                                   0 & \dive
                                                             \end{pmatrix}$.
To obtain a Hamiltonian structure for the operator $\sqrt{\E^{-1}}A_{\mathrm{Max}}\sqrt{\E^{-1}}+\sqrt{\E}A_\ac\sqrt{\E}$ we additionally have to assume that $\E$ is compatible with the permutation matrix in the sense that
\begin{equation*}
 \left(\begin{array}{cccc}
0 & 0 & 0 & 1\\
0 & 1 & 0 & 0\\
0 & 0 & 1 & 0\\
1 & 0 & 0 & 0
\end{array}\right)\E\left(\begin{array}{cccc}
0 & 0 & 0 & 1\\
0 & 1 & 0 & 0\\
0 & 0 & 1 & 0\\
1 & 0 & 0 & 0
\end{array}\right)=\begin{pmatrix}
                    \mathcal{G}_{00} & \begin{pmatrix}
                                          0 & 0 \\
                                          0 & 0
                                       \end{pmatrix}\\
                      \begin{pmatrix}
                                          0 & 0 \\
                                          0 & 0
                                       \end{pmatrix} & \mathcal{G}_{11}
                   \end{pmatrix}
\end{equation*}
for suitable operators $\mathcal{G}_{00},\mathcal{G}_{11}$.
\item As mentioned in the introduction
the extended Maxwell system is actually rooted in the differential
form calculus. In differential forms and for simplicity in the isotropic,
homogeneous case the extended Maxwell system takes on the simple form
\begin{equation}
\partial_{0}+\left(\interior d\wedge\right)-\left(\interior d\wedge\right)^{*},\label{eq:e-max}
\end{equation}
where $\interior d\wedge$ denotes the closure of the exterior derivative
initially defined on smooth differential forms with compact support
in the Riemannian manifold $M$. The adjoint $\left(\interior d\wedge\right)^{*}$
is taken with respect to the inner product of the direct sum $L^{0,2}\left(M\right)\oplus L^{1,2}\left(M\right)\oplus L^{2,2}\left(M\right)\oplus L^{3,2}\left(M\right)$,
where $L^{q,2}\left(M\right)$ denotes the $L^{2}$-space of alternating
differential forms of order $q\in\{0,1,2,3\}$. The block matrix structure
induced by the direct sum structure of the underlying space leads
to
\begin{equation*}
\left(\begin{array}{cccc}
0 & -\left(\interior d\wedge\right)^{*} & 0 & 0\\
\interior d\wedge & 0 & -\left(\interior d\wedge\right)^{*} & 0\\
0 & \interior d\wedge & 0 & -\left(\interior d\wedge\right)^{*}\\
0 & 0 & \interior d\wedge & 0
\end{array}\right)
\end{equation*}
corresponding to $A_{\mathrm{Max}}+A_\ac$
or unitarily equivalent on $L^{3,2}\left(M\right)\oplus L^{1,2}\left(M\right)\oplus L^{2,2}\left(M\right)\oplus L^{0,2}\left(M\right)$
\begin{equation*}
\left(\begin{array}{cccc}
0 & 0 & \interior d\wedge & 0\\
0 & 0 & -\left(\interior d\wedge\right)^{*} & \interior d\wedge\\
-\left(\interior d\wedge\right)^{*} & \interior d\wedge & 0 & 0\\
0 & -\left(\interior d\wedge\right)^{*} & 0 & 0
\end{array}\right)
\end{equation*}
corresponding to (\ref{eq:ham-max1}). The form of (\ref{eq:e-max})
obviously also extends to different dimensions, see \cite{0552.58033,Picard1984}.
For example, the 1-dimensional extended Maxwell operator is in Cartesian
block matrix form simply 
\begin{equation*}
\left(\begin{array}{cc}
0 & \partial_{1}\\
\partial_{1} & 0
\end{array}\right).
\end{equation*}
 \end{enumerate}

\end{rem}

\section{\label{sec:GEM}Gravito-Electromagnetism (GEM)}

The extended Maxwell system appear to be the linearization of the equations of gravitation (see \cite{citeulike:3940266}, where a formulation within the framework of quaternions is provided).
More recently,  a reduced version of the extended Maxwell system  was used in the context of modern \textbf{gravito-electromagnetism}
(\textbf{GEM}), \cite{Ulrych2006}. The equation reads as 
\begin{equation}\label{eq:gem}
 \left(\partial_0+\sqrt{\C^{-1}}\begin{pmatrix}
                            0 & 0 &0 \\
                            0 & 0 & -\curl \\
                            0& \interior{\curl} &0
                           \end{pmatrix}\sqrt{\C^{-1}} +\sqrt{\C} \begin{pmatrix}
                                                                  0 & \dive & 0 \\
                                                                  \interior{\grad} & 0 &0 \\
                                                                   0 & 0 & 0
                                                                  \end{pmatrix} \sqrt{\C}\right)\begin{pmatrix}
                                                                                                 C\\
                                                                                                 E\\
                                                                                                 H
                                                                                               \end{pmatrix} = F,
\end{equation}
where $\C$ is assumed to be a selfadjoint strictly positive continuous operator on $H\coloneqq L^2(\Omega)\oplus L^2(\Omega)^3\oplus L^2(\Omega)^3$ and $F\in H_{\rho,0}(\mathbb{R};H)$ is a given source term. As in Remark \ref{rem:annihilate} we obtain that the spatial operators
\begin{equation*}
 \sqrt{\C^{-1}}\begin{pmatrix}
                            0 & 0 &0 \\
                            0 & 0 & -\curl \\
                            0& \interior{\curl} &0
                           \end{pmatrix}\sqrt{\C^{-1}} \mbox{ and } \sqrt{\C} \begin{pmatrix}
                                                                   0 & \dive & 0 \\
                                                                   \interior{\grad} & 0 &0 \\
                                                                    0 & 0 & 0
                                                                   \end{pmatrix} \sqrt{\C}
\end{equation*}
annihilate each other, yielding that 
\begin{equation*}
 A=\sqrt{\C^{-1}}\begin{pmatrix}
                            0 & 0 &0 \\
                            0 & 0 & -\curl \\
                            0& \interior{\curl} &0
                           \end{pmatrix}\sqrt{\C^{-1}} + \sqrt{\C} \begin{pmatrix}
                                                                   0 & \dive & 0 \\
                                                                   \interior{\grad} & 0 &0 \\
                                                                    0 & 0 & 0
                                                                   \end{pmatrix} \sqrt{\C}
\end{equation*}
defines a skew-selfadjoint operator on $H$. Obviously, if $(C,E,H)$ solves (\ref{eq:gem}) for some right-hand side $F=(F_0,F_1,F_2)$ then 
\begin{equation*}
 (\partial_0+\sqrt{\E^{-1}}A_\Max\sqrt{\E^{-1}}+\sqrt{\E}A_\Dac\sqrt{\E})V=\tilde{F},
\end{equation*}
 where $V=(C,E,H,0),\tilde{F}=(F_0,F_1,F_2,0)$ and $\E=\begin{pmatrix}
                                                        \C &0\\
                                                        0 & 1
                                                       \end{pmatrix}$ and vice versa. The connection of the latter system with the extended Maxwell system is stated in the following theorem.

\begin{thm}
Let $\C$ be of the form $\begin{pmatrix}
                          \C_{00} & \C_{01} &0\\
                          \C_{10} & \C_{11} &0 \\
                           0&0&\C_{22}
                         \end{pmatrix}$. Let $K\in L(L^2(\Omega))$ be selfadjoint and $S\in L(L^2(\Omega),L^2(\Omega)^3)$ such that $K-S^\ast \C^{-1} S$ is strictly positive definite.  
\begin{equation*}
 \E\coloneqq \begin{pmatrix}
              \C & \begin{pmatrix}
                    0\\
                    0\\
                    S
             \end{pmatrix} \\
             \begin{pmatrix}
              0 &0&S^\ast
             \end{pmatrix} &K
            \end{pmatrix}.
\end{equation*}
Let $F\in H_{\rho,-\infty}(\mathbb{R};H_1(\sqrt{\E}A_\Nac\sqrt{\E}+1))$ and let $U$ satisfy 
\begin{equation}\label{eq:gem_null}
 (\partial_0+\sqrt{\E^{-1}}A_\Max\sqrt{\E^{-1}}+\sqrt{\E}A_\Dac\sqrt{\E})U=F.
\end{equation}
Then $U$ satisfies \begin{equation*}(\partial_0+\sqrt{\E^{-1}}A_\Max\sqrt{\E^{-1}}+\sqrt{\E}A_\Dac\sqrt{\E}+\sqrt{\E}A_\Nac\sqrt{\E})U=(1+\partial_0^{-1}\sqrt{\E}A_\Nac\sqrt{\E})F.\end{equation*}
Conversely, if $U$ satisfies  
\begin{equation}\label{eq:ext_maxwell}
(\partial_0+\sqrt{\E^{-1}}A_\Max\sqrt{\E^{-1}}+\sqrt{\E}A_\Dac\sqrt{\E}+\sqrt{\E}A_\Nac\sqrt{\E})U=\tilde{F}
\end{equation}
for some $\tilde{F}\in H_{\rho,-\infty}(\mathbb{R};H_0(\sqrt{\E}A_\Nac\sqrt{\E}+1))$, then $U$ is a solution of (\ref{eq:gem_null}) for $F\coloneqq \partial_0(\partial_0+\sqrt{\E}A_\Nac\sqrt{\E})^{-1}\tilde{F}\in H_{\rho,-\infty}(\mathbb{R};H_1(\sqrt{\E}A_\Nac\sqrt{\E}+1))$. 
\end{thm}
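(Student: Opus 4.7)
My plan is to mimic the proof of Theorem~\ref{thm:connection_max_maxext}: establish a suitable operator factorisation from a pairwise annihilation property, and then read off both directions. Setting $B\coloneqq \sqrt{\E^{-1}}A_\Max\sqrt{\E^{-1}}+\sqrt{\E}A_\Dac\sqrt{\E}$ and $C\coloneqq \sqrt{\E}A_\Nac\sqrt{\E}$, the first step is to observe that the prescribed block form of $\E$ is exactly the structure considered in Remark~\ref{rem:extended_maxwell}(a): $\E$ is block-diagonal with respect to the splitting $(0,1)\oplus(2,3)$ of the four-block ambient space, and so are $\sqrt{\E}$ and $\sqrt{\E^{-1}}$. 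Combined with the algebraic annihilations $A_\Max A_\Dac=A_\Dac A_\Max=0$, $A_\Max A_\Nac=A_\Nac A_\Max=0$ and $A_\Dac A_\Nac=A_\Nac A_\Dac=0$ -- which follow from Remark~\ref{rem:annihilate} and an elementary inspection of the block supports of the three operators -- this compatibility yields the pairwise annihilation $BC=CB=0$.

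Since $[\partial_0,B]=[\partial_0,C]=0$ and $BC=CB=0$, expanding the product gives the factorisation
\begin{equation*}
 (\partial_0+B)(\partial_0+C)=(\partial_0+C)(\partial_0+B)=\partial_0(\partial_0+B+C),
\end{equation*}
which lifts to the Sobolev-lattice scale associated with the commuting and continuously invertible family $(\partial_0,B+1,C+1)$ supplied by Remark~\ref{rem:extended_maxwell}(a). For the first direction, given $(\partial_0+B)U=F$ I apply $C$ to both sides; since $CB=0$ and $C$ commutes with $\partial_0$, one gets $\partial_0 CU=CF$, i.e.\ $CU=\partial_0^{-1}CF$, and adding $CU$ to $(\partial_0+B)U=F$ produces $(\partial_0+B+C)U=(1+\partial_0^{-1}C)F$ as required. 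The hypothesis $F\in H_{\rho,-\infty}(\mathbb{R};H_1(C+1))$ is precisely what allows $CF$ to be a well-defined element of $H_{\rho,-\infty}(\mathbb{R};H)$. For the converse, multiplying $(\partial_0+B+C)U=\tilde F$ by $\partial_0$ and invoking the factorisation yields $(\partial_0+C)(\partial_0+B)U=\partial_0\tilde F$; inverting $\partial_0+C$ and using its commutation with $\partial_0$ gives $(\partial_0+B)U=\partial_0(\partial_0+C)^{-1}\tilde F=F$, while the hypothesis $\tilde F\in H_{\rho,-\infty}(\mathbb{R};H_0(C+1))$ is exactly the regularity ensuring that $(\partial_0+C)^{-1}\tilde F\in H_{\rho,-\infty}(\mathbb{R};H_1(C+1))$, so that $F$ has the claimed regularity.

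The main obstacle I expect is not the formal algebra but rather the justification that the algebraic identities persist in the unbounded setting after sandwiching with $\sqrt{\E^{\pm 1}}$, and that the factorisation extends from a smooth core to the full distribution scale $H_{\rho,-\infty}$. Both issues are handled by embedding everything into the Sobolev lattice associated with the commuting family $(\partial_0,B+1,C+1)$, where the functional calculus of jointly commuting continuously invertible operators grants direct access to products, sums and inverses in all relevant interpolation spaces.
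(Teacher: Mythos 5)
Your proposal is correct and follows essentially the same route as the paper: the paper likewise notes that the block structure of $\E$ (as in Remark \ref{rem:extended_maxwell}(a)) forces $\sqrt{\E^{-1}}A_\Max\sqrt{\E^{-1}}$, $\sqrt{\E}A_\Dac\sqrt{\E}$ and $\sqrt{\E}A_\Nac\sqrt{\E}$ to pairwise annihilate, derives the factorisation $(\partial_0+\sqrt{\E}A_\Nac\sqrt{\E})(\partial_0+\sqrt{\E^{-1}}A_\Max\sqrt{\E^{-1}}+\sqrt{\E}A_\Dac\sqrt{\E})=\partial_0(\partial_0+\sqrt{\E^{-1}}A_\Max\sqrt{\E^{-1}}+\sqrt{\E}A_\Dac\sqrt{\E}+\sqrt{\E}A_\Nac\sqrt{\E})$, and then reads off both implications exactly as in the proof of Theorem \ref{thm:connection_max_maxext}. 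Your explicit computations for the two directions are just the unwound version of that reference, so nothing substantive differs.
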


\begin{proof}
 First we note that $\sqrt{\E^{-1}}A_\Max\sqrt{\E^{-1}}, \sqrt{\E}A_\Dac\sqrt{\E}$ and $\sqrt{\E}A_\Nac\sqrt{\E}$ pairwise annihilate each other, due to the given structure of $\E$ (see Remark \ref{rem:extended_maxwell}). Thus, we have that
\begin{multline*}
 (\partial_0+\sqrt{\E}A_\Nac\sqrt{\E})(\partial_0+\sqrt{\E^{-1}}A_\Max\sqrt{\E^{-1}}+\sqrt{\E}A_\Dac\sqrt{\E})\\=\partial_0(\partial_0+\sqrt{\E^{-1}}A_\Max\sqrt{\E^{-1}}+\sqrt{\E}A_\Dac\sqrt{\E}+\sqrt{\E}A_\Nac\sqrt{\E})
\end{multline*}
yielding that
\begin{multline*}
 \partial_0^{-1}(\partial_0+\sqrt{\E^{-1}}A_\Max\sqrt{\E^{-1}}+\sqrt{\E}A_\Dac\sqrt{\E}+\sqrt{\E}A_\Nac\sqrt{\E})^{-1}\\= (\partial_0+\sqrt{\E}A_\Nac\sqrt{\E})^{-1}(\partial_0+\sqrt{\E^{-1}}A_\Max\sqrt{\E^{-1}}+\sqrt{\E}A_\Dac\sqrt{\E})^{-1}.
\end{multline*}Using this factorization result, we may derive the assertion as in the proof of Theorem \ref{thm:connection_max_maxext}.
\end{proof}

\begin{rem}
The system (\ref{eq:gem}) makes its reappearance from a completely different angle
in \cite{zbMATH01334186} and for numerical purposes in \cite{zbMATH05172225},
\cite{Weggler1-2012}, as an alternative to the fully extended Maxwell
system.
\end{rem}

\section{\label{sec:The-Dirac-System}The Dirac System}

>From (\ref{eq:Dirac-factors}) we see that Dirac's problem of finding
a first order factorization of  the differential operator $\square$
of the wave equation, which gave rise to the celebrated Dirac equations
\cite{54.0973.01,54.0973.02}, is solved by the extended Maxwell system
with $\mathcal{E}=1$. This observation will now be utilized to reformulate
Dirac's equation as an extended Maxwell system.

\begin{defn}
We define the differential operator $C(\hat{\partial})$ on $L^2(\mathbb{R}^3)^2$, by\footnote{Note that 
\begin{equation*}
\left(\begin{array}{cc}
\partial_{3} & \partial_{1}-\ii\,\partial_{2}\\
\partial_{1}+\ii\,\partial_{2} & -\partial_{3}
\end{array}\right)
\end{equation*}
is an operator quaternion, since it has the form
\begin{equation*}
\left(\begin{array}{cc}
A & -B^{*}\\
B & A^{*}
\end{array}\right),
\end{equation*}
where $A:D\left(A\right)\subseteq H\to H,\, B:D\left(B\right)\subseteq H\to H$
are closed, densely defined linear operators, such that $A$ has a
non-empty resolvent set $\nu\left(A\right)$ and $A,B^{*}$ are commuting,
i.e.
\begin{equation*}
\left(\lambda-A\right)^{-1}B\subseteq B\left(\lambda-A\right)^{-1}
\end{equation*}
for $\lambda\in\nu\left(A\right)$. If $A,B$ are complex numbers
(as multipliers) this block operator matrix yields a standard representation
of the classical quaternions.%
}
\begin{equation*}
 C(\hat{\partial})\;:=\left(\begin{array}{cc}
\partial_{3} & \partial_{1}-\ii\,\partial_{2}\\
\partial_{1}+\ii\,\partial_{2} & -\partial_{3}
\end{array}\right).
\end{equation*}
 The \emph{Dirac operator} is then given by the block operator matrix on the space $H_{\rho,0}(\mathbb{R};L^2(\mathbb{R}^3)^4)$
\begin{equation*}
 \mathcal{Q}_{0}(\partial_{0},\hat{\partial}) :=  \left(\begin{array}{cc}
\partial_{0}+\ii & C(\hat{\partial})\\
C(\hat{\partial}) & \partial_{0}-\ii
\end{array}\right).
\end{equation*}
\end{defn}

\begin{rem}
Frequently, the operator $C(\hat{\partial})$ is represented by $\sum_{k=1}^{3}\,\Pi_{k}\,\partial_{k},$ where 
\begin{equation*}
\Pi_{1}:=\left(\begin{array}{cc}
0 & +1\\
+1 & 0
\end{array}\right),\:\Pi_{2}:=\left(\begin{array}{cc}
0 & -\ii\\
+\ii & 0
\end{array}\right),\:\Pi_{3}:=\left(\begin{array}{cc}
+1 & 0\\
0 & -1
\end{array}\right)
\end{equation*}
are known as the \emph{Pauli matrices}.
\end{rem}

\begin{lem}
 The Dirac operator $\mathcal{Q}_0(\partial_0,\hat{\partial})$ is unitarily equivalent to the operator
\begin{equation*}
 \mathcal{Q}_1(\partial_0,\hat{\partial})\coloneqq \left(\begin{array}{cc}
\partial_{0} & \ii-\ii\, C(\hat{\partial})\\
\ii+\ii\, C(\hat{\partial}) & \partial_{0}
\end{array}\right).
\end{equation*}
 In particular, $\mathcal{Q}_1$ is of the Hamiltonian form $\partial_0+\begin{pmatrix}
                                                                   0 & -W^\ast\\
                                                                   W& 0
                                                                  \end{pmatrix}$, where $W\coloneqq \i(1+C(\hat{\partial}))$.
\end{lem}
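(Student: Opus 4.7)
The plan is to exhibit an explicit scalar unitary $V\in\C^{2\times 2}$, extended fiberwise to the full Hilbert space $L^{2}(\mathbb{R}^{3})^{4}=L^{2}(\mathbb{R}^{3})^{2}\oplus L^{2}(\mathbb{R}^{3})^{2}$, that intertwines $\mathcal{Q}_{0}(\partial_{0},\hat{\partial})$ and $\mathcal{Q}_{1}(\partial_{0},\hat{\partial})$. The decisive structural observation is that once $V$ has constant complex entries, conjugation $M\mapsto VMV^{\ast}$ (with $V$ understood as $V\otimes I_{L^{2}(\mathbb{R}^{3})^{2}}$) commutes with the scalar operators $\partial_{0}$ and $C(\hat{\partial})$ appearing in the $2\times 2$ block structure, so the whole identity reduces to a finite-dimensional computation performed entry by entry.

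Concretely, I would propose
\begin{equation*}
 V\coloneqq \frac{1}{\sqrt{2}}\begin{pmatrix} 1 & -\ii \\ 1 & \ii \end{pmatrix},\qquad V^{\ast}=\frac{1}{\sqrt{2}}\begin{pmatrix} 1 & 1 \\ \ii & -\ii \end{pmatrix},
\end{equation*}
verify $VV^{\ast}=I$ by direct multiplication, and then compute $V\mathcal{Q}_{0}V^{\ast}$ block by block. The two diagonal entries collapse to $\partial_{0}$ (the $\pm\ii$ terms from $\mathcal{Q}_{0}$ cancel with the $\mp\ii$ introduced by the outer $V$-factors, and the two $C(\hat{\partial})$ cross-terms cancel against each other), while the off-diagonal entries simplify to $\ii-\ii C(\hat{\partial})$ in position $(1,2)$ and $\ii+\ii C(\hat{\partial})$ in position $(2,1)$. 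This matches $\mathcal{Q}_{1}$ exactly.

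For the Hamiltonian-form assertion, I would first record $C(\hat{\partial})^{\ast}=-C(\hat{\partial})$, which follows from the representation $C(\hat{\partial})=\sum_{k=1}^{3}\Pi_{k}\partial_{k}$ together with the Hermiticity of the Pauli matrices $\Pi_{k}$ and the skew-selfadjointness of the partial derivatives $\partial_{k}$ on $L^{2}(\mathbb{R}^{3})$ (no boundary contributions arise on full space). Using this, $W^{\ast}=-\ii(1+C(\hat{\partial})^{\ast})=-\ii+\ii C(\hat{\partial})$, so $-W^{\ast}=\ii-\ii C(\hat{\partial})$, which is the top-right block of $\mathcal{Q}_{1}-\partial_{0}I$, while the bottom-left block equals $W=\ii+\ii C(\hat{\partial})$ by definition.

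There is no substantive analytic difficulty; the entire argument is algebraic once $V$ is in hand. The only point requiring mild care is the domain bookkeeping, namely that $V\otimes I$ sends $\dom(\mathcal{Q}_{0})$ bijectively onto $\dom(\mathcal{Q}_{1})$. This is immediate because $V$ has scalar entries and therefore commutes with any operator acting trivially on the $\C^{2}$-factor, so the common domain $H_{\rho,1}(\mathbb{R};L^{2}(\mathbb{R}^{3})^{4})\cap H_{\rho,0}(\mathbb{R};\dom(C(\hat{\partial}))\oplus\dom(C(\hat{\partial})))$ is preserved.
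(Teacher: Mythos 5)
Your proposal is correct and takes essentially the same route as the paper: your unitary $V=\frac{1}{\sqrt{2}}\begin{pmatrix}1 & -\ii\\ 1 & \ii\end{pmatrix}$ is exactly $-\ii$ times the transformation $\frac{1}{\sqrt{2}}\begin{pmatrix}\ii & 1\\ \ii & -1\end{pmatrix}$ used in the paper, so the two conjugations coincide, and the Hamiltonian-form part rests in both cases on the skew-selfadjointness $C(\hat{\partial})^{\ast}=-C(\hat{\partial})$. The block-by-block computation and the domain remark are fine.
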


\begin{proof}
 Applying the unitary transformation given by the block matrix $\frac{1}{\sqrt{2}}\begin{pmatrix}
                                                                  +\i & +1 \\
                                                                  +\i & -1
                                                                 \end{pmatrix}$ to $\mathcal{Q}_0(\partial_0,\hat{\partial})$ we obtain
\begin{align*}
 \frac{1}{2}\begin{pmatrix}
                                                                  +\i & +1 \\
                                                                  +\i & -1
                                                                 \end{pmatrix}\left(\begin{array}{cc}
\partial_{0}+\ii & C(\hat{\partial})\\
C(\hat{\partial}) & \partial_{0}-\ii
\end{array}\right)\begin{pmatrix} 
                   -\i & -\i \\
                    1 & -1 
                  \end{pmatrix}&=\frac{1}{2}\begin{pmatrix}
                                                                  +\i & +1 \\
                                                                  +\i & -1
                                                                 \end{pmatrix} \begin{pmatrix}
                                                                                 -\i\partial_0+1+C(\hat{\partial}) & -\i\partial_0+1-C(\hat{\partial}) \\
                                                                                  \partial_0-\i-\i C(\hat{\partial}) & -\partial_0+\i-\i C(\hat{\partial})
                                                                               \end{pmatrix}\\
&= \begin{pmatrix}
    \partial_0 & \i-\i C(\hat{\partial})\\
     \i +\i C(\hat{\partial}) & \partial_0
   \end{pmatrix}\\
&= \mathcal{Q}_1(\partial_0,\hat{\partial}).
\end{align*}
To see that $\mathcal{Q}_1(\partial_0,\hat{\partial})$ has the asserted Hamiltonian structure, we first observe that
\begin{equation*}
 C(\hat{\partial})^\ast=\begin{pmatrix}
                         -\partial_3 & -\partial_1+\i\partial_2\\
                         -\partial_1-\i\partial_2& \partial_3
                        \end{pmatrix}=-C(\hat{\partial})
\end{equation*}
and thus
\begin{equation*}
 W^\ast= -\i-\i C(\hat{\partial})^\ast=-\i+\i C(\hat{\partial}),
\end{equation*}
which yields the assertion.
\end{proof}

To see the connection between the Dirac equation and the extended
Maxwell equation we separate real and imaginary parts. For doing so, we consider $\mathbb{C}$ as a two dimensional Hilbert space over the field $\mathbb{R}$ with inner product $\langle z_0|z_1\rangle\coloneqq \Re z_0^\ast z_1$. Then the mapping
\begin{align*}
 \mathbb{R}^2&\to \mathbb{C} \\
      (x,y) &\mapsto x+\i y
\end{align*}
  becomes unitary. The multiplication with a complex number $a+\i b$, $a,b\in \mathbb{R}$ then becomes the multiplication with the matrix $\begin{pmatrix} 
                                                                                                                      a& -b\\
                                                                                                                      b & a 
                                                                                                                      \end{pmatrix}$.
Consequently, the operator $W$ is unitarily equivalent to 
\begin{equation*}
 \tilde{W}\coloneqq \begin{pmatrix}
  0 & -1-\partial_3 & \partial_2 & -\partial_1 \\
  1+\partial_3 & 0 & \partial_1 & \partial_2 \\
  -\partial_2 & -\partial_1 & 0 & -1+\partial_3\\
   \partial_1 & -\partial_2 & 1-\partial_3 & 0
 \end{pmatrix}= \begin{pmatrix}
  0 & -1 & 0 & 0 \\
  1 & 0 &  0&0 \\
 0 & 0 & 0 & -1\\
  0 & 0 & 1 & 0
 \end{pmatrix} +\begin{pmatrix}
  0 & -\partial_3 & \partial_2 & -\partial_1 \\
  \partial_3 & 0 & \partial_1 & \partial_2 \\
  -\partial_2 & -\partial_1 & 0 & \partial_3\\
   \partial_1 & -\partial_2 & -\partial_3 & 0
 \end{pmatrix}.
\end{equation*}

\begin{thm}
 The operator $\begin{pmatrix}
                0 & -\tilde{W}^\ast \\
               \tilde{W} &0
               \end{pmatrix}$
is unitarily equivalent to 
\begin{equation*}
 \mathcal{M}_1+ \begin{pmatrix}
                     0 & 0 & \dive & 0\\
                     0 & 0 & -\curl & \grad \\
                    \grad & \curl & 0 &0 \\
                     0 & \dive & 0 &0
                    \end{pmatrix},
\end{equation*}
where \[
       \mathcal{M}_{1}=\left(\begin{array}{cc}
\left(\begin{array}{cc}
0 & \left(\begin{array}{ccc}
0 & 0 & 0\end{array}\right)\\
\left(\begin{array}{c}
0\\
0\\
0
\end{array}\right) & \left(\begin{array}{ccc}
0 & 0 & 0\\
0 & 0 & 0\\
0 & 0 & 0
\end{array}\right)
\end{array}\right) & \left(\begin{array}{cc}
\left(\begin{array}{ccc}
0 & 0 & -1\end{array}\right) & 0\\
\left(\begin{array}{ccc}
0 & 1 & 0\\
-1 & 0 & 0\\
0 & 0 & 0
\end{array}\right) & \left(\begin{array}{c}
0\\
0\\
-1
\end{array}\right)
\end{array}\right)\\
\left(\begin{array}{cc}
\left(\begin{array}{c}
0\\
0\\
1
\end{array}\right) & \left(\begin{array}{ccc}
0 & 1 & 0\\
-1 & 0 & 0\\
0 & 0 & 0
\end{array}\right)\\
0 & \left(\begin{array}{ccc}
0 & 0 & 1\end{array}\right)
\end{array}\right) & \left(\begin{array}{cc}
\left(\begin{array}{ccc}
0 & 0 & 0\\
0 & 0 & 0\\
0 & 0 & 0
\end{array}\right) & \left(\begin{array}{c}
0\\
0\\
0
\end{array}\right)\\
\left(\begin{array}{ccc}
0 & 0 & 0\end{array}\right) & 0
\end{array}\right)
\end{array}\right)
      \]

\end{thm}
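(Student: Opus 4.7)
The plan is to prove the stated unitary equivalence by splitting $\tilde W$ into its zeroth- and first-order parts and then conjugating by an $8\times 8$ orthogonal matrix that relabels the eight scalar components. Writing $\tilde W = T + D$ with
\[
 T = \begin{pmatrix} 0 & -1 & 0 & 0 \\ 1 & 0 & 0 & 0 \\ 0 & 0 & 0 & -1 \\ 0 & 0 & 1 & 0 \end{pmatrix},
 \qquad
 D = \begin{pmatrix} 0 & -\partial_3 & \partial_2 & -\partial_1 \\ \partial_3 & 0 & \partial_1 & \partial_2 \\ -\partial_2 & -\partial_1 & 0 & \partial_3 \\ \partial_1 & -\partial_2 & -\partial_3 & 0 \end{pmatrix},
\]
one gets the decomposition
\[
 \begin{pmatrix} 0 & -\tilde W^\ast \\ \tilde W & 0 \end{pmatrix}
 = \begin{pmatrix} 0 & -T^\ast \\ T & 0 \end{pmatrix}
 + \begin{pmatrix} 0 & -D^\ast \\ D & 0 \end{pmatrix}.
\]
Direct inspection shows that each coefficient matrix of $\partial_k$ in $D$ is antisymmetric. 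Since on the full space $\mathbb{R}^3$ the formal adjoint of $\partial_k$ is $-\partial_k$, the adjoint of $D$ on $L^2(\mathbb{R}^3)^4$ is $D^\ast = -\sum_k M_k^T\partial_k = \sum_k M_k \partial_k = D$, so the first-order block reduces to $\begin{pmatrix} 0 & -D \\ D & 0 \end{pmatrix}$, which is already skew-selfadjoint.

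The main step will be to construct an $8\times 8$ orthogonal matrix $U$ (concretely, a signed permutation) so that conjugation by $U$ sends $\begin{pmatrix} 0 & -D \\ D & 0 \end{pmatrix}$ to the target differential operator
\[
 \begin{pmatrix}
  0 & 0 & \dive & 0 \\
  0 & 0 & -\curl & \grad \\
  \grad & \curl & 0 & 0 \\
  0 & \dive & 0 & 0
 \end{pmatrix},
\]
which is precisely the Hamiltonian form of $A_\Max + A_\ac$ identified in Remark \ref{rem:extended_maxwell}(c) on $\mathbb{R}^3$. To find $U$ I would label the components of the source as $(u_1,\ldots,u_4,v_1,\ldots,v_4)$ and those of the target as $(\phi, E_1, E_2, E_3, H_1, H_2, H_3, \psi)$, and then match row by row: each of the eight entries of the target operator acting on a generic tuple has to equal the corresponding component of $\begin{pmatrix} -Dv \\ Du \end{pmatrix}$ after relabelling, which pins down the signed permutation $U$ up to the obvious symmetries of the target.

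With $U$ in hand, the conclusion follows by evaluating $U^\ast \begin{pmatrix} 0 & -T^\ast \\ T & 0 \end{pmatrix} U$ and recognising the result as the constant matrix $\mathcal{M}_1$ stated in the theorem; because $T$ is itself a block-diagonal constant matrix, this last step is a purely finite algebraic check once $U$ is known. The hard part will be the bookkeeping needed to produce $U$ explicitly and then to verify that the same $U$ simultaneously transforms the zeroth-order block into exactly the sparsity pattern displayed in $\mathcal{M}_1$. A useful sanity check is that both $D$ and the differential part of the target square to $-\Delta$ times the identity (in line with \eqref{eq:Dirac-factors}), so the two first-order operators must be conjugate by some element of $O(8)$; together with the sparsity of the coefficient matrices, this reduces the search for $U$ to a finite combinatorial problem that can be settled by inspection.
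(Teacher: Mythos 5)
Your strategy is in substance the one the paper itself follows: split $\tilde W=T+D$ into its constant and first\-/order parts, use that the coefficient matrices of $D$ are antisymmetric so that $D$ is formally selfadjoint on $L^{2}(\mathbb{R}^{3})^{4}$ and the operator in question becomes
\[
\begin{pmatrix}0&T\\ T&0\end{pmatrix}+\begin{pmatrix}0&-D\\ D&0\end{pmatrix},
\]
and then conjugate by a signed permutation of the eight scalar components that is block\-/diagonal with respect to the $4+4$ splitting. The paper's unitary is exactly of this form, $\operatorname{diag}(Q^{-1},P)$ with $P,Q$ signed $4\times4$ permutations, and its proof consists of the two finite computations
\[
P\,D\,Q=\begin{pmatrix}\grad&\curl\\ 0&\dive\end{pmatrix},\qquad P\,T\,Q=S,
\]
after which the target differential block and $\mathcal{M}_{1}=\begin{pmatrix}0&-S\\ S&0\end{pmatrix}$ both follow by taking adjoints. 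So your preliminary reductions are correct and your plan, if executed, would reproduce the paper's argument.

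The genuine gap is that you never execute it: the unitary $U$ is not exhibited, and neither of the two identities it must satisfy (that it carries the first\-/order block onto the displayed $\dive$/$\curl$/$\grad$ operator, and that the \emph{same} $U$ carries the constant block onto the specific $\mathcal{M}_{1}$ in the statement) is verified. For a theorem whose entire content is an explicit operator identity, these two $4\times4$ computations \emph{are} the proof; deferring them as ``bookkeeping'' leaves nothing established. Two further cautions. First, the row\-/by\-/row matching only determines $U$ up to the symmetries of the differential target, so you must actually check that some admissible choice produces the stated $\mathcal{M}_{1}$ exactly (sign conventions matter here, and the paper's $P$ and $Q$ are \emph{different} signed permutations, not a single one applied on both sides). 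Second, your ``sanity check'' overreaches: the fact that both first\-/order operators square to $\Delta$ times the identity does not by itself imply they are conjugate under $O(8)$; as a heuristic for guiding the search it is harmless, but it cannot be used to assert existence of $U$ in place of the explicit construction.
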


\begin{proof}
 We compute 
\begin{align*}
 \begin{pmatrix}
  0 & 0 & -1& 0\\
  0 & 0 & 0 & -1\\
 -1 & 0 &0 &0 \\
  0 & 1 &0 &0
 \end{pmatrix}\begin{pmatrix}
  0 & -\partial_3 & \partial_2 & -\partial_1 \\
  \partial_3 & 0 & \partial_1 & \partial_2 \\
  -\partial_2 & -\partial_1 & 0 & \partial_3\\
   \partial_1 & -\partial_2 & -\partial_3 & 0
 \end{pmatrix} \begin{pmatrix}
                 0 & 0 & 0 & 1\\
                 1 & 0 & 0 & 0 \\
                 0 & 1 & 0 &0 \\
                 0 & 0 & 1 & 0
               \end{pmatrix} &= \begin{pmatrix}
  0 & 0 & -1& 0\\
  0 & 0 & 0 & -1\\
 -1 & 0 &0 &0 \\
  0 & 1 &0 &0
 \end{pmatrix} \begin{pmatrix}
               -\partial_3 & \partial_2 & -\partial_1 & 0\\
               0 & \partial_1 & \partial_2 & \partial_3 \\
               -\partial_1 & 0 & \partial_3 & -\partial_2 \\
               -\partial_2 & -\partial_3 & 0 & \partial_1
               \end{pmatrix}\\
 &= \begin{pmatrix}
     \partial_1 & 0 & -\partial_3 & \partial_2 \\
     \partial_2 & \partial_3 & 0 & -\partial_1\\
     \partial_3 & -\partial_2 & \partial _1 & 0 \\
     0 & \partial_1 & \partial_2 & \partial_3
    \end{pmatrix}\\
&= \begin{pmatrix}
    \grad & \curl \\
    0 & \dive
   \end{pmatrix}.
\end{align*}
Moreover, we obtain
\begin{align*}
 \begin{pmatrix}
  0 & 0 & -1& 0\\
  0 & 0 & 0 & -1\\
 -1 & 0 &0 &0 \\
  0 & 1 &0 &0
 \end{pmatrix}\begin{pmatrix}
  0 & -1 & 0 & 0 \\
  1 & 0 & 0 & 0 \\
  0 & 0 & 0 & -1\\
  0 & 0 & 1 & 0
 \end{pmatrix} \begin{pmatrix}
                 0 & 0 & 0 & 1\\
                 1 & 0 & 0 & 0 \\
                 0 & 1 & 0 &0 \\
                 0 & 0 & 1 & 0
               \end{pmatrix} &= \begin{pmatrix}
  0 & 0 & -1& 0\\
  0 & 0 & 0 & -1\\
 -1 & 0 &0 &0 \\
  0 & 1 &0 &0
 \end{pmatrix} \begin{pmatrix}
                -1 & 0 & 0 & 0\\
                0 & 0 & 0 & 1\\
                0 & 0 & -1& 0\\
                0 & 1 & 0 & 0
               \end{pmatrix}\\
 &= \begin{pmatrix}
     0 & 0 & 1 & 0\\
     0 & -1 & 0 & 0\\
     1 & 0 & 0 & 0 \\
    0 & 0 & 0 & 1
    \end{pmatrix}
\end{align*}
and hence, we have that
\begin{equation*}
 \begin{pmatrix}
  0 & 0 & -1& 0\\
  0 & 0 & 0 & -1\\
 -1 & 0 &0 &0 \\
  0 & 1 &0 &0
 \end{pmatrix}\tilde{W}\begin{pmatrix}
                 0 & 0 & 0 & 1\\
                 1 & 0 & 0 & 0 \\
                 0 & 1 & 0 &0 \\
                 0 & 0 & 1 & 0
               \end{pmatrix}=\begin{pmatrix}
     0 & 0 & 1 & 0\\
     0 & -1 & 0 & 0\\
     1 & 0 & 0 & 0 \\
    0 & 0 & 0 & 1
    \end{pmatrix}+\begin{pmatrix}
    \grad & \curl \\
    0 & \dive
   \end{pmatrix}.
\end{equation*}
Consequently, 
\begin{equation*}
 \begin{pmatrix}
                 0 & 1 & 0 & 0\\
                 0 & 0 & 1 & 0 \\
                 0 & 0 & 0 &1 \\
                 1 & 0 & 0 & 0
               \end{pmatrix}\tilde{W}^\ast \begin{pmatrix}
  0 & 0 & -1& 0\\
  0 & 0 & 0 & 1\\
 -1 & 0 &0 &0 \\
  0 & -1 &0 &0
 \end{pmatrix}=\begin{pmatrix}
     0 & 0 & 1 & 0\\
     0 & -1 & 0 & 0\\
     1 & 0 & 0 & 0 \\
    0 & 0 & 0 & 1
    \end{pmatrix}+\begin{pmatrix}
    -\dive & 0 \\
    \curl & -\grad
   \end{pmatrix}
\end{equation*}
which gives that
\begin{align*}
& \left(\begin{array}{cc}
\left(\begin{array}{cccc}
0 & 1 & 0 & 0\\
0 & 0 & 1 & 0\\
0 & 0 & 0 & 1\\
1 & 0 & 0 & 0
\end{array}\right) & \left(\begin{array}{cccc}
0 & 0 & 0 & 0\\
0 & 0 & 0 & 0\\
0 & 0 & 0 & 0\\
0 & 0 & 0 & 0
\end{array}\right)\\
\left(\begin{array}{cccc}
0 & 0 & 0 & 0\\
0 & 0 & 0 & 0\\
0 & 0 & 0 & 0\\
0 & 0 & 0 & 0
\end{array}\right) & \left(\begin{array}{cccc}
0 & 0 & -1 & 0\\
0 & 0 & 0 & -1\\
-1 & 0 & 0 & 0\\
0 & 1 & 0 & 0
\end{array}\right)
\end{array}\right)\left(\begin{array}{cc}
0 & -W^{*}\\
W & 0
\end{array}\right)\left(\begin{array}{cc}
\left(\begin{array}{cccc}
0 & 0 & 0 & 1\\
1 & 0 & 0 & 0\\
0 & 1 & 0 & 0\\
0 & 0 & 1 & 0
\end{array}\right) & \left(\begin{array}{cccc}
0 & 0 & 0 & 0\\
0 & 0 & 0 & 0\\
0 & 0 & 0 & 0\\
0 & 0 & 0 & 0
\end{array}\right)\\
\left(\begin{array}{cccc}
0 & 0 & 0 & 0\\
0 & 0 & 0 & 0\\
0 & 0 & 0 & 0\\
0 & 0 & 0 & 0
\end{array}\right) & \left(\begin{array}{cccc}
0 & 0 & -1 & 0\\
0 & 0 & 0 & 1\\
-1 & 0 & 0 & 0\\
0 & -1 & 0 & 0
\end{array}\right)
\end{array}\right)=\\
 & =\left(\begin{array}{cc}
\left(\begin{array}{cc}
0 & \left(\begin{array}{ccc}
0 & 0 & 0\end{array}\right)\\
\left(\begin{array}{c}
0\\
0\\
0
\end{array}\right) & \left(\begin{array}{ccc}
0 & 0 & 0\\
0 & 0 & 0\\
0 & 0 & 0
\end{array}\right)
\end{array}\right) & \left(\begin{array}{cc}
\dive & 0\\
-\curl & \grad
\end{array}\right)+\left(\begin{array}{cc}
\left(\begin{array}{ccc}
0 & 0 & -1\end{array}\right) & 0\\
\left(\begin{array}{ccc}
0 & 1 & 0\\
-1 & 0 & 0\\
0 & 0 & 0
\end{array}\right) & \left(\begin{array}{c}
0\\
0\\
-1
\end{array}\right)
\end{array}\right)\\
\left(\begin{array}{cc}
\grad & \curl\\
0 & \dive
\end{array}\right)+\left(\begin{array}{cc}
\left(\begin{array}{c}
0\\
0\\
1
\end{array}\right) & \left(\begin{array}{ccc}
0 & 1 & 0\\
-1 & 0 & 0\\
0 & 0 & 0
\end{array}\right)\\
0 & \left(\begin{array}{ccc}
0 & 0 & 1\end{array}\right)
\end{array}\right) & \left(\begin{array}{cc}
\left(\begin{array}{ccc}
0 & 0 & 0\\
0 & 0 & 0\\
0 & 0 & 0
\end{array}\right) & \left(\begin{array}{c}
0\\
0\\
0
\end{array}\right)\\
\left(\begin{array}{ccc}
0 & 0 & 0\end{array}\right) & 0
\end{array}\right)
\end{array}\right).
\end{align*} 
Since this transformation is obviously unitary, we derive the assertion.
\end{proof}

\begin{rem}$\,$\label{rem:Dirac}
\begin{enumerate}[(a)] 
\item We have shown that in the free-space situation, the Dirac operator is unitarily equivalent to the extended Maxwell operator\footnote{In the framework of quaternions a connection between a differently extended time-harmonic Maxwell operator and the time-harmonic Dirac
operator has earlier been discovered by Kravchenko and Shapiro, \cite{0305-4470-28-17-030-1},
compare also \cite{Krav}.
} (here
in the Hamiltonian form, see Remark \ref{rem:extended_maxwell}) 
\begin{equation*}
\partial_{0}+\mathcal{M}_{1}+\left(\begin{array}{cccc}
0 & 0 & \dive & 0\\
0 & 0 & -\curl & \grad\\
\grad & \curl & 0 & 0\\
0 & \dive & 0 & 0
\end{array}\right),
\end{equation*}
 considered as operators on the real Hilbert space $H_{\rho,0}(\mathbb{R};L^2(\mathbb{R}^3;\mathbb{R}^2)^4)$. Here, 
\begin{equation*}
\mathcal{M}_{1}=\left(\begin{array}{cc}
\left(\begin{array}{cc}
0 & \left(\begin{array}{ccc}
0 & 0 & 0\end{array}\right)\\
\left(\begin{array}{c}
0\\
0\\
0
\end{array}\right) & \left(\begin{array}{ccc}
0 & 0 & 0\\
0 & 0 & 0\\
0 & 0 & 0
\end{array}\right)
\end{array}\right) & \left(\begin{array}{cc}
\left(\begin{array}{ccc}
0 & 0 & -1\end{array}\right) & 0\\
\left(\begin{array}{ccc}
0 & 1 & 0\\
-1 & 0 & 0\\
0 & 0 & 0
\end{array}\right) & \left(\begin{array}{c}
0\\
0\\
-1
\end{array}\right)
\end{array}\right)\\
\left(\begin{array}{cc}
\left(\begin{array}{c}
0\\
0\\
1
\end{array}\right) & \left(\begin{array}{ccc}
0 & 1 & 0\\
-1 & 0 & 0\\
0 & 0 & 0
\end{array}\right)\\
0 & \left(\begin{array}{ccc}
0 & 0 & 1\end{array}\right)
\end{array}\right) & \left(\begin{array}{cc}
\left(\begin{array}{ccc}
0 & 0 & 0\\
0 & 0 & 0\\
0 & 0 & 0
\end{array}\right) & \left(\begin{array}{c}
0\\
0\\
0
\end{array}\right)\\
\left(\begin{array}{ccc}
0 & 0 & 0\end{array}\right) & 0
\end{array}\right)
\end{array}\right)                                                                                                                    
\end{equation*}
 is skew-selfadjoint, i.e. from the electrodynamics perspective we
are in a chiral media case. 
\item It is a rather remarkable observation that the Dirac equation is so
closely connected to the extended Maxwell system. It appears from
this perspective that spinors are actually a redundant construction
since the alternating forms setup for the extended Maxwell system
is already quite sufficient to discuss Dirac's equation. The interpretation
of this observation is not a mathematical issue but may well be a
matter for theoretical physicists to contemplate. Due to unitary equivalence
of course nothing of the spinor structure is actually lost and may
still be utilized if desired. Since the extended Maxwell system is
actually a differential forms construction, as a by-product we get
a natural transfer to Riemannian manifolds other than $\mathbb{R}^{3}$
for free, see Remark \ref{rem:extended_maxwell} (d). As
for the consideration of lower dimensional cases, we recall the discussion
in Remark \ref{rem:extended_maxwell} (d). 
\end{enumerate}
\end{rem}

\section{\label{sec:The-Maxwell-Dirac-System}The Maxwell-Dirac System}
The Maxwell-Dirac System (\cite{zbMATH05724791}) is a coupled system of Maxwell's equations and the Dirac equation. The coupling occurs via the so-called vector and scalar potentials. In Physics literature solutions to the Dirac equation are also known as the spinors. Having demonstrated the connections of the Maxwell equation to the extended Maxwell system as well as the connection of the Dirac equation to the extended Maxwell system, it remains to provide the relationship of the vector and scalar potentials to the extended Maxwell system\footnote{In this context it is remarkable that already Dirac, by suggesting
to include magnetic charge densities in Maxwell's equations, \cite{Dirac01091931}
, was led to consider a potential of the general form $\left(\begin{array}{c}
\alpha_{0}\\
\alpha_{1}\\
\alpha_{2}\\
\alpha_{3}
\end{array}\right)$, which in effect leads to an extended Maxwell system not for the
electromagnetic field but for its potential: 
\begin{equation*}
\left(\partial_{0}-A\right)\alpha=\left(\begin{array}{c}
0\\
E\\
H\\
0
\end{array}\right)+\delta\otimes\alpha_{\left(0\right)}.
\end{equation*}
}. This will actually play the major role in this section. For simplicity, we focus on the case, where the material parameters are of the form $\mathcal{E}=1$ and $M_0=1$. We shall abbreviate
\begin{equation}\label{eq:operator}
A=\left(A_{\mathrm{Max}}+A_{\mathrm{Dac}}+A_{\mathrm{Nac}}\right)=\left(\begin{array}{cccc}
0 & \dive & 0 & 0\\
\interior\grad & 0 & -\curl & 0\\
0 & \interior\curl & 0 & \grad\\
0 & 0 & \interior\dive & 0
\end{array}\right)
\end{equation}
on the open set $\Omega\subseteq \mathbb{R}^3$. The relationship of the equations for the scalar and the vector potentials to the extended Maxwell system is as follows:
\begin{thm}\label{thm:pot_solves_Dirac} Let $\Omega\subseteq \mathbb{R}^3$ open and let $(E,H)\in H_{\rho,-\infty}(\mathbb{R},L^2(\Omega)^6)$ satisfy Maxwell's equation, i.e.,
\[
   \partial_0 \begin{pmatrix}
                 E \\ H
              \end{pmatrix}+ \begin{pmatrix}
                 0 & -\curl \\ \interior \curl & 0
              \end{pmatrix}\begin{pmatrix} E\\ H\end{pmatrix} = \begin{pmatrix}
                 -J \\ 0
              \end{pmatrix}+\delta\otimes \begin{pmatrix}
                 E_0 \\ H_0
              \end{pmatrix}
\]
with the property that $E_0\in D(\dive)$, $H_0 \in \interior{\curl}[D(\interior{\curl})]$, and $J\in H_{\rho,0}(\mathbb{R};L^2(\Omega)^3)$ such that $J=0$ on $]-\infty,0[$.
Let $\alpha=(\alpha_0,\alpha_1,\alpha_2,\alpha_3)\in H_{\rho,-\infty}(\mathbb{R};L^2(\Omega)\oplus L^2(\Omega)^3\oplus L^2(\Omega)^3\oplus L^2(\Omega))$, $\alpha_{1,0}\in D(\interior{\curl})$. Then the following statements are equivalent:
\begin{enumerate}[(i)]
 \item $\alpha$ satisfies \[
                           (\partial_0 - A)\alpha=\begin{pmatrix}
                                                   0\\E\\H\\0
                                                  \end{pmatrix}+ \delta\otimes\begin{pmatrix}
                                                   0\\ \alpha_{1,0}\\ 0\\0
                                                  \end{pmatrix}
                          \]
  and $H_0 =-\interior{\curl}\alpha_{1,0}$,
 \item $\alpha_2=0$, $\alpha_3=0$, and $\alpha_0$, $\alpha_1$ satisfy
\[
   E=\partial_0 (\alpha_1- \chi_{]0,\infty[}\otimes \alpha_{1,0})- \grad \alpha_0\text{ and } H= -\interior{\curl}\alpha_1.
\]
\end{enumerate}
\end{thm}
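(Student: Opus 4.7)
The proof hinges on the decomposition $A = A_{\Max} + A_{\ac}$ with $A_{\ac} = A_{\Dac} + A_{\Nac}$ and the mutual annihilation $A_{\Max}A_{\ac} = A_{\ac}A_{\Max} = 0$ from Remark~\ref{rem:annihilate}, which gives the factorization $(\partial_0 + A)(\partial_0 - A) = \partial_0^2 - A^2$ with $A^2 = A_{\Max}^2 + A_{\ac}^2$ block-diagonal on the four components of $\alpha$. The argument also repeatedly invokes the algebraic identities $\interior\curl\,\interior\grad = 0$ and $\interior\dive\,\interior\curl = 0$ underpinning this cascade structure.

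For (ii) $\Rightarrow$ (i), I proceed by direct componentwise verification of $(\partial_0-A)\alpha = (0,E,H,0)^\top + \delta\otimes(0,\alpha_{1,0},0,0)^\top$. With $\alpha_2=\alpha_3=0$, row~4 reads $0 = 0$; row~3 reduces to $-\interior\curl\alpha_1 = H$, which is the second formula of (ii); and row~2, using $\partial_0\chi_{]0,\infty[} = \delta$, becomes $\partial_0\alpha_1 - \interior\grad\alpha_0 = E + \delta\otimes\alpha_{1,0}$, matching the first formula of (ii) upon $\interior\grad\alpha_0 = \grad\alpha_0$ forced by the domain requirement. Row~1 encodes the Lorenz gauge $\partial_0\alpha_0 = \dive\alpha_1$ built into the domain of $\overline{\partial_0-A}$. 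Finally, $H_0 = -\interior\curl\alpha_{1,0}$ is extracted from the initial jump of $H = -\interior\curl\alpha_1$ at $t=0+$.

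The main direction is (i) $\Rightarrow$ (ii). I apply $(\partial_0+A)$ to the equation in (i), obtaining $(\partial_0^2 - A^2)\alpha = (\partial_0+A)(F+\delta\otimes V_0)$ with $F=(0,E,H,0)^\top$ and $V_0=(0,\alpha_{1,0},0,0)^\top$. The block-diagonality of $A^2$ isolates the third and fourth components of $\alpha$ into closed second-order equations, and the key point is that their right-hand sides vanish: at row~3 the source reads $\partial_0 H + \interior\curl E + \delta\otimes\interior\curl\alpha_{1,0}$, which by Maxwell's equation collapses to $\delta\otimes(H_0 + \interior\curl\alpha_{1,0}) = 0$; at row~4 it reduces to $\interior\dive H$, and this vanishes because $\partial_0\interior\dive H = -\interior\dive\,\interior\curl E + \delta\otimes\interior\dive H_0 = 0$ thanks to $\interior\dive\,\interior\curl = 0$ and $\interior\dive H_0 = -\interior\dive\,\interior\curl\alpha_{1,0} = 0$. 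Hence $\alpha_2$ and $\alpha_3$ solve $(\partial_0^2 + \interior\curl\curl - \grad\,\interior\dive)\alpha_2 = 0$ and $(\partial_0^2 - \interior\dive\,\grad)\alpha_3 = 0$; both vanish on $]-\infty,0[$ by causality, and uniqueness in the $H_{\rho,-\infty}$-framework forces $\alpha_2 = \alpha_3 = 0$ globally. Substituting back into rows~2 and~3 of the original equation, and absorbing $\delta\otimes\alpha_{1,0}$ into $\partial_0(\chi_{]0,\infty[}\otimes\alpha_{1,0})$, yields precisely the two formulas of (ii).

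The main obstacle I anticipate is the distributional bookkeeping when composing $\partial_0 + A$ with the $\delta$-source $\delta\otimes V_0$, in particular justifying that identities like $\interior\dive\,\interior\curl\alpha_{1,0} = 0$ make sense. The hypothesis $\alpha_{1,0}\in D(\interior\curl)$, combined with the inclusion $\interior\curl[D(\interior\curl)]\subseteq D(\interior\dive)$ with zero divergence, is exactly what is needed. The uniqueness step for the homogeneous wave equations for $\alpha_2,\alpha_3$ then amounts to continuous invertibility of the associated second-order operators in the Sobolev lattice of \cite{PDE_DeGruyter}, which the framework provides without further regularity work.
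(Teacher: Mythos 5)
Your forward direction is essentially correct but follows a genuinely different route from the paper for the component $\alpha_2$. The paper handles $\alpha_3$ exactly as you do (the scalar wave equation $\partial_0^2\alpha_3-\interior\dive\grad\alpha_3=\interior\dive H=0$, with $\interior\dive H=0$ coming from $H$ taking values in $\overline{\interior\curl[D(\interior\curl)]}\subseteq N(\interior\dive)$), but for $\alpha_2$ it does not pass to a second-order equation: it applies $\interior\curl$ to the second row and $\curl$ to the third, obtains a first-order Maxwell system for the pair $(\curl\alpha_2,\,H+\interior\curl\alpha_1)$ with vanishing data, deduces $\curl\alpha_2=0$ and $H+\interior\curl\alpha_1=0$, and then concludes $\alpha_2=0$ from the orthogonality $\alpha_2\in\overline{\interior\curl[D(\interior\curl)]}\cap N(\curl)=\{0\}$. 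Your uniform argument via $(\partial_0+A)(\partial_0-A)=\partial_0^2-A^2$ and the block-diagonality of $A^2$ is arguably cleaner and recovers $H=-\interior\curl\alpha_1$ by mere substitution; it is legitimate \emph{provided} you apply only the third and fourth rows of $\partial_0+A$. You cannot apply $\partial_0+A$ to the whole right-hand side: its first row produces $\dive E+\delta\otimes\dive\alpha_{1,0}$, and neither term is defined under the hypotheses ($J\in D(\dive)$ and $\alpha_{1,0}\in D(\dive)$ are not assumed). Also, your derivation of $\interior\dive H=0$ by differentiating $\interior\dive H$ presupposes $H\in D(\interior\dive)$, which is the point at issue; replace it by the paper's argument that $\partial_0 H=-\interior\curl E+\delta\otimes H_0$ takes values in the closed subspace $\overline{\interior\curl[D(\interior\curl)]}\subseteq N(\interior\dive)$, hence so does $H$.

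The genuine gap is in (ii) $\Rightarrow$ (i). The first row of (i) is the equation $\partial_0\alpha_0-\dive\alpha_1=0$, and your assertion that it is ``built into the domain of $\overline{\partial_0-A}$'' is not a verification: nothing in (ii) yields this Lorenz-gauge relation. Indeed, (ii) is invariant under $\alpha_0\mapsto\alpha_0+\partial_0\varphi$, $\alpha_1\mapsto\alpha_1+\interior\grad\varphi$, whereas the first row of (i) changes by $(\partial_0^2-\dive\interior\grad)\varphi$ --- precisely the gauge-freedom observation made in the remark following the theorem. Similarly, the identification $\interior\grad\alpha_0=\grad\alpha_0$ is a boundary condition on $\alpha_0$ that (ii) does not supply, so your row-2 verification also silently strengthens the hypotheses. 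The paper offers no guidance here (it declares the converse ``trivial''), but as written your converse does not close without importing the gauge relation and $\alpha_0\in D(\interior\grad)$ as additional assumptions.
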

\begin{proof}
 Let $H\coloneqq L^2(\Omega)\oplus L^2(\Omega)^3\oplus L^2(\Omega)^3\oplus L^2(\Omega)$. Assume that $\alpha$ satisfies the extended Maxwell system with right hand side $\begin{pmatrix}
                                                   0\\E\\H\\0
                                                  \end{pmatrix}+ \delta\otimes\begin{pmatrix}
                                                   0\\ \alpha_{1,0}\\ 0\\0
                                                  \end{pmatrix}$. 
We have in $H_{\nu,-\infty}\left(\mathbb{R},H\right)$
\begin{align*}
\partial_{0}\alpha_{0}-\dive\alpha_{1} & =0,\\
\partial_{0}\left(\alpha_{1}-\chi_{_{\oi0\infty}}\otimes\alpha_{1,0}\right)-\interior\grad\alpha_{0}+\curl\alpha_{2} & =E,\\
\partial_{0}\alpha_{2}-\interior\curl\alpha_{1}-\grad\alpha_{3} & =H,\\
\partial_{0}\alpha_{3}-\interior\dive\alpha_{2} & =0.
\end{align*}
We consider the last two equations
\begin{align*}
\partial_{0}\alpha_{2}-\interior\curl\alpha_{1}-\grad\alpha_{3} & =H,\\
\partial_{0}\alpha_{3}-\interior\dive\alpha_{2} & =0.
\end{align*}
Applying $\interior\dive$ to the first and substituting $\interior\dive\alpha_{2}$
from the second equation yields, still computing in the space $H_{\nu,-\infty}\left(\mathbb{R},H\right)$,
\begin{equation*}
\partial_{0}^{2}\alpha_{3}-\interior\dive\grad\alpha_{3}  =\interior\dive H.
\end{equation*}
Since $H$ satisfies the equation
\begin{equation*}
 \partial_0H+\interior\curl E =\delta\otimes H_0,
\end{equation*}
we obtain, using that $H_0\in \interior\curl[D(\interior\curl)]$,  
\begin{equation}\label{eq:H_in_curl}
H\in H_{\rho,-\infty}\left(\mathbb{R};\overline{\interior\curl[D(\interior\curl)]}\right)\subseteq H_{\rho,-\infty}(\mathbb{R};N(\interior{\dive})). 
\end{equation}
Consequently, $\alpha_3$ satisfies the wave equation with vanishing source term from which we derive $\alpha_3=0$. The latter gives
\begin{equation*}
\interior\dive\alpha_{2}=0.
\end{equation*}
 Consider now the second equation
\begin{equation}
\partial_{0}\left(\alpha_{1}-\chi_{_{\oi0\infty}}\otimes\alpha_{1,0}\right)-\interior\grad\alpha_{0}+\curl\alpha_{2}=E.\label{eq:pot-2}
\end{equation}
Since $E,H$ satisfy Maxwell's equations we have in particular $\interior\curl E\in H_{\nu,-\infty}\left(\mathbb{R},H\right)$
as well as $\interior\curl\alpha_{1}\in H_{\nu,-\infty}\left(\mathbb{R},H\right)$
and so recalling that 
\begin{equation*}
\alpha_{1,0}\in D\left(\interior\curl\right)
\end{equation*}
and applying $\interior\curl$ to (\ref{eq:pot-2}) we get using one
of the Maxwell equations 
\begin{align*}
\partial_{0}\interior\curl\alpha_{1}+\interior\curl\left(\curl\alpha_{2}\right) & =\interior\curl E+\delta\otimes\interior\curl\alpha_{1,0}\\
 & =-\partial_{0}H+\delta\otimes\left(H_{0}+\interior\curl\alpha_{1,0}\right).
\end{align*}
Thus, 
\begin{align}
\partial_{0}\left(H+\interior\curl\alpha_{1}\right)+\interior\curl\left(\curl\alpha_{2}\right) & =\delta\otimes\left(H_{0}+\interior\curl\alpha_{1,0}\right)=0,\label{eq:alpha12-1}
\end{align}
by our assumption for the initial data $H_{0},\:\alpha_{1,0}$. Now, from the third equation we obtain that 
\begin{align}
\partial_{0}\alpha_{2}-\interior\curl\alpha_{1} & =H,\nonumber \\
\partial_{0}\left(\curl\alpha_{2}\right)-\curl\left(\interior\curl\alpha_{1}\right) & =\curl H,\nonumber \\
\partial_{0}\left(\curl\alpha_{2}\right)-\curl\left(H+\interior\curl\alpha_{1}\right) & =0.\label{eq:alpha21}
\end{align}
With (\ref{eq:alpha12-1}) we have that $\left(\begin{array}{c}
\curl\alpha_{2}\\
H+\interior\curl\alpha_{1}
\end{array}\right)$ satisfies a Maxwell system for vanishing data. Thus, we have also
$\curl\alpha_{2}=0$ and $H+\interior\curl\alpha_{1}=0$. Since we have that $\alpha_{2}\in\overline{\interior\curl\left[D\left(\interior\curl\right)\right]}$, by (\ref{eq:H_in_curl}) and the first equation of (\ref{eq:alpha21}), and $\curl\alpha_{2}=0$, we get $\alpha_{2}=0$, which shows the implication (i) $\Rightarrow$ (ii).
The converse implication is trivial.
\end{proof}
\begin{rem}
The tempted reader might wonder that for the potential $\alpha$ being the unique solution of a partial differential equations, there is no room left for the issue of gauge invariance. This is, however, not the case. In fact, we observe that different potentials result in a different right-hand side of the partial differential equation under consideration.
 More precisely, we have
\begin{equation*}
\left(\partial_{0}-A\right)\left(\begin{array}{c}
\alpha_{0}+\partial_{0}\varphi\\
\alpha_{1}+\interior\grad\varphi\\
0\\
0
\end{array}\right)=\left(\begin{array}{c}
\left(\partial_{0}^{2}-\dive\interior\grad\right)\varphi\\
E\\
H\\
0
\end{array}\right)+\delta\otimes\left(\begin{array}{c}
0\\
\alpha_{1,0}\\
0\\
0
\end{array}\right)
\end{equation*}
for suitable scalar fields $\varphi$.
\end{rem}

Now, we come back to the original Maxwell-Dirac system. As it has been demonstrated in the previous sections, the spinor part for the banded extended Maxwell version (i.e. the part of the Dirac equation) corresponds to\footnote{Note that in Remark \ref{rem:Dirac} the operator $\mathcal{M}_1$ corresponds to the Hamiltonian form of the extended Maxwell system, while here we deal with the non-Hamiltonian form (\ref{eq:operator}).}
\begin{align*}
M_{1} & =\left(\begin{array}{cccccccc}
0 & 0 & 0 & 1 & 0 & 0 & 0 & 0\\
0 & 0 & 0 & 0 & 0 & 1 & 0 & 0\\
0 & 0 & 0 & 0 & -1 & 0 & 0 & 0\\
-1 & 0 & 0 & 0 & 0 & 0 & 0 & 0\\
0 & 0 & 1 & 0 & 0 & 0 & 0 & 0\\
0 & -1 & 0 & 0 & 0 & 0 & 0 & 0\\
0 & 0 & 0 & 0 & 0 & 0 & 0 & 1\\
0 & 0 & 0 & 0 & 0 & 0 & -1 & 0
\end{array}\right)\\
 & =\left(\begin{array}{cccc}
0 & \left(\begin{array}{ccc}
0 & 0 & 1\end{array}\right) & \left(\begin{array}{ccc}
0 & 0 & 0\end{array}\right) & 0\\
\left(\begin{array}{c}
0\\
0\\
-1
\end{array}\right) & \left(\begin{array}{ccc}
0 & 0 & 0\\
0 & 0 & 0\\
0 & 0 & 0
\end{array}\right) & \left(\begin{array}{ccc}
0 & 1 & 0\\
-1 & 0 & 0\\
0 & 0 & 0
\end{array}\right) & \left(\begin{array}{c}
0\\
0\\
0
\end{array}\right)\\
\left(\begin{array}{c}
0\\
0\\
0
\end{array}\right) & \left(\begin{array}{ccc}
0 & 1 & 0\\
-1 & 0 & 0\\
0 & 0 & 0
\end{array}\right) & \left(\begin{array}{ccc}
0 & 0 & 0\\
0 & 0 & 0\\
0 & 0 & 0
\end{array}\right) & \left(\begin{array}{c}
0\\
0\\
1
\end{array}\right)\\
0 & \left(\begin{array}{ccc}
0 & 0 & 0\end{array}\right) & \left(\begin{array}{ccc}
0 & 0 & -1\end{array}\right) & 0
\end{array}\right).
\end{align*}
Moreover, recall from Theorem \ref{thm:pot_solves_Dirac} that under the constraint
 \begin{equation}
H_{0}+\interior\curl\alpha_{1,0}=0\label{eq:pot-req-3}
\end{equation}
 the potential $\alpha$ is the solution of 
\begin{equation*}
\left(\partial_{0}-A\right)\alpha=\left(\begin{array}{c}
0\\
E\\
H\\
0
\end{array}\right)+\delta\otimes\left(\begin{array}{c}
0\\
\alpha_{1,0}\\
0\\
0
\end{array}\right)\in H_{\nu,0}\left(\mathbb{R},H\right),
\end{equation*}
where $\left(\begin{array}{c}
0\\
E\\
H\\
0
\end{array}\right)$ is the solution $U^{\left(0\right)}$ of 
\begin{equation*}
\left(\partial_{0}+A\right)U^{\left(0\right)}=\left(\begin{array}{c}
\rho\\
-J\\
0\\
0
\end{array}\right)+\delta\otimes\left(\begin{array}{c}
0\\
E_{0}\\
H_{0}\\
0
\end{array}\right),
\end{equation*}
which  requires the compatibility condition 
\begin{equation}
\rho=-\dive\partial_{0}^{-1}J+\chi_{_{\oi0\infty}}\otimes\dive E_{0}. \label{eq:compatible}
\end{equation}
So the Maxwell-Dirac system is of the form
\begin{align*}
 & \left(\partial_{0}+\tilde{M}_{1}+\left(\begin{array}{ccc}
A & 0 & 0\\
0 & A & 0\\
0 & 0 & -A
\end{array}\right)\right)U=\\
 & =\left(\begin{array}{c}
\left(\begin{array}{c}
\rho\\
-J\\
0\\
0
\end{array}\right)+\delta\otimes\left(\begin{array}{c}
0\\
E_{0}\\
H_{0}\\
0
\end{array}\right)\\
g+\delta\otimes\psi_{0}\\
\left(\partial_{0}+A\right)^{-1}\left(\left(\begin{array}{c}
\rho\\
-J\\
0\\
0
\end{array}\right)+\delta\otimes\left(\begin{array}{c}
0\\
E_{0}\\
H_{0}\\
0
\end{array}\right)\right)+\delta\otimes\left(\begin{array}{c}
0\\
\alpha_{1,0}\\
0\\
0
\end{array}\right)
\end{array}\right)\eqqcolon F,
\end{align*}
where under assumptions (\ref{eq:compatible}), (\ref{eq:pot-req-3}),
$U$ is of the form $U=\left(\begin{array}{c}
\left(\begin{array}{c}
0\\
E\\
H\\
0
\end{array}\right)\\
\psi\\
\left(\begin{array}{c}
\alpha_{0}\\
\alpha_{1}\\
0\\
0
\end{array}\right)
\end{array}\right)$. 

There is no apparent coupling in this system. In the actual Maxwell-Dirac
system the coupling is via quadratic non-linearities built into the
right-hand side $J,g$. The real-valued eight-component $\psi$ corresponds
to the four-component complex Dirac state (up to a (real) unitary
mapping) for which usually the same name is used. We have
\begin{equation*}
\tilde{M}_{1}=\left(\begin{array}{ccc}
0 & 0 & 0\\
0 & M_{1} & 0\\
0 & 0 & 0
\end{array}\right),
\end{equation*}
with $M_{1}$ as above ($M_{1}$ is skew-selfadjoint).
\begin{rem}
The mentioned non-linearities have to at least match the structure
for a right-hand side capable of satisfying the original Maxwell equations,
see e.g. \cite{zbMATH05679674,zbMATH05724791}. 

Let $A_{k}\in\mathbb{R}^{8\times8}$ be selfadjoint matrices, such that
\begin{equation*}
A=\sum_{k=1}^{3}A_{k}\partial_{k}.
\end{equation*}
Then we set
\begin{align}
J&=\left(\left\langle \psi|A_{k}\psi\right\rangle \right)_{k}\label{eq:MD-req1},\\
g &=\alpha_{0}S\psi+\sum_{k=1}^3\alpha_{k}SA_{k}\psi,\label{eq:MD-req-2},\\
SA_{k}&=A_{k}S \nonumber
\end{align}
with $S$ skew-selfadjoint, $\alpha_k\in \mathbb{R}$ for $k\in \{1,2,3\}$. 

In this case, (on a purely formal level) we need to have for real-valued data
and solutions according to (\ref{eq:compatible}):
\begin{align*}
\partial_{0}\rho & =-\dive J+\delta\otimes\dive E_{0}.
\end{align*}
Formally calculating the right-hand side we get (where we compute in the real setting, i.e. $\mathbb{R}$ is the underlying scalar field)
\begin{align*}
-\dive J+\delta\otimes\dive E_{0} & =-2\left\langle \psi|A_{k}\partial_{k}\psi\right\rangle +\delta\otimes\dive E_{0}\\
 & =\partial_{0}\left|\psi\right|^{2}-2\left\langle \psi|\left(\partial_{0}+A\right)\psi\right\rangle +\delta\otimes\dive E_{0}\\
 & =\partial_{0}\left|\psi\right|^{2}+2\left\langle \psi|\left(M_{1}-\alpha_{0}S-\sum_{k=1}^3\alpha_{k}SA_{k}\right)\psi-\delta\otimes\psi_{0}\right\rangle \\ &\quad +\delta\otimes\dive E_{0}\\
 & =\partial_{0}\left|\psi\right|^{2}+2\left\langle \psi|\left(M_{1}-\alpha_{0}S-\sum_{k=1}^3\alpha_{k}SA_{k}\right)\psi\right\rangle -\delta\otimes\left\langle \psi_{0}|\psi_{0}\right\rangle\\ &\quad  +\delta\otimes\dive E_{0}\\
 & =\partial_{0}\left|\psi\right|^{2}+2\left\langle \psi|\left(M_{1}-\alpha_{0}S-\sum_{k=1}^3\alpha_{k}SA_{k}\right)\psi\right\rangle \\
 & =\partial_{0}\left|\psi\right|^{2}+2\left\langle \psi|M_{1}\psi\right\rangle -2\alpha_{0}\left\langle \psi|S\psi\right\rangle -\sum_{k=1}^32\alpha_{k}\left\langle \psi|SA_{k}\psi\right\rangle +\\
 & \quad -\delta\otimes\left(\left|\psi_{0}\right|^{2}-\dive E_{0}\right)\\
 & =\partial_{0}\left|\psi\right|^{2}-\delta\otimes\left(\left|\psi_{0}\right|^{2}-\dive E_{0}\right).
\end{align*}
So, assuming 
\begin{equation}
\left|\psi_{0}\right|^{2}=\dive E_{0}\label{eq:MD-req3}
\end{equation}
 we have that the appropriate charge density coupling term matching
the assumed quadratic non-linearity (\ref{eq:MD-req1}) must be 
\begin{equation*}
\rho=\left|\psi\right|^{2}
\end{equation*}
 to ensure that the current density $J$ is suitable to generate an
electro-magnetic field.

\end{rem}

\end{document}